\documentclass[
    a4paper,
    USenglish,
    cleveref,
    autoref,
    thm-restate,
    % anonymous 
]{lipics-v2021}
%This is a template for producing LIPIcs articles.
%See lipics-v2021-authors-guidelines.pdf for further information.
%for A4 paper format use option "a4paper", for US-letter use option "letterpaper"
%for british hyphenation rules use option "UKenglish", for american hyphenation rules use option "USenglish"
%for section-numbered lemmas etc., use "numberwithinsect"
%for enabling cleveref support, use "cleveref"
%for enabling autoref support, use "autoref"
%for anonymousing the authors (e.g. for double-blind review), add "anonymous"
%for enabling thm-restate support, use "thm-restate"
%for enabling a two-column layout for the author/affilation part (only applicable for > 6 authors), use "authorcolumns"
%for producing a PDF according the PDF/A standard, add "pdfa"

%\pdfoutput=1 %uncomment to ensure pdflatex processing (mandatatory e.g. to submit to arXiv)
%\hideLIPIcs  %uncomment to remove references to LIPIcs series (logo, DOI, ...), e.g. when preparing a pre-final version to be uploaded to arXiv or another public repository

%\graphicspath{{./graphics/}}%helpful if your graphic files are in another directory
\nolinenumbers
\hideLIPIcs

\usepackage{algorithm}
\usepackage{algpseudocode}
\usepackage{cases}
\usepackage{mathtools}
\usepackage{amsmath}
\usepackage{complexity}
% \usepackage{amssymb}
% \usepackage{nath}
% \delimgrowth=1

% \usepackage{thmtools}
% \usepackage{thm-restate}

\newcommand{\eps}{\varepsilon}
\newcommand{\sqeps}{\sqrt\varepsilon}
\newcommand{\reveps}{\frac{1}{\eps}}
\newcommand{\floor}[1]{\left\lfloor #1 \right\rfloor}
\newcommand{\ceil}[1]{\left\lceil #1 \right\rceil}
\newcommand{\prb}[1]{\textnormal{\scshape #1}}
% \newtheorem{conjecture}[theorem]{Conjecture}

% \makeatletter
% \def\thmt@innercounters{equation,algocf}
% \makeatother

\usepackage{subcaption}
\usepackage{tabularx}
\usepackage{booktabs} % For better horizontal lines in tables

% Source: Bringmann et al.
\newcommand{\defproblem}[3]{
  \vspace{2mm}
%  \hline
  \vspace{1mm}
\noindent\fbox{
  \begin{minipage}{0.95\textwidth}
  #1 \\
  {\bf{Input:}} #2  \\
  {\bf{Task:}} #3
  \end{minipage}
  }
%  \vspace{1mm}
%  \hline
  \vspace{2mm}
}

% \newtheorem{claim}[theorem]{Claim}

% Todonotes is useful during development; simply uncomment the next line
%    and comment out the line below the next line to turn off comments
%\usepackage[disable,textsize=tiny]{todonotes}
\usepackage[textsize=tiny]{todonotes}

\bibliographystyle{plainurl}% the mandatory bibstyle

\title{Approximate Min-Sum Subset Convolution} 

%TODO Please add

%\titlerunning{Dummy short title} %TODO optional, please use if title is longer than one line

\author{Mihail Stoian}{University of Technology Nuremberg, Germany\and \url{https://stoianmihail.github.io} }{mihail.stoian@utn.de}{https://orcid.org/0000-0002-8843-3374}{}%TODO mandatory, please use full name; only 1 author per \author macro; first two parameters are mandatory, other parameters can be empty. Please provide at least the name of the affiliation and the country. The full address is optional. Use additional curly braces to indicate the correct name splitting when the last name consists of multiple name parts.

\authorrunning{M. Stoian} %TODO mandatory. First: Use abbreviated first/middle names. Second (only in severe cases): Use first author plus 'et al.'

\Copyright{Mihail Stoian} %TODO mandatory, please use full first names. LIPIcs license is "CC-BY";  http://creativecommons.org/licenses/by/3.0/

\begin{CCSXML}
<ccs2012>
   <concept>
       <concept_id>10003752.10003809.10011254.10011258</concept_id>
       <concept_desc>Theory of computation~Dynamic programming</concept_desc>
       <concept_significance>500</concept_significance>
       </concept>
   <concept>
       <concept_id>10003752.10003809.10003636</concept_id>
       <concept_desc>Theory of computation~Approximation algorithms analysis</concept_desc>
       <concept_significance>500</concept_significance>
       </concept>
 </ccs2012>
\end{CCSXML}

\ccsdesc[500]{Theory of computation~Dynamic programming}
\ccsdesc[500]{Theory of computation~Approximation algorithms analysis}

\keywords{subset convolution,
    min-plus convolution,
    min-max convolution,
    dynamic programming,
    approximation algorithms
} %TODO mandatory; please add comma-separated list of keywords

\category{} %optional, e.g. invited paper

\relatedversion{} %optional, e.g. full version hosted on arXiv, HAL, or other respository/website
%\relatedversiondetails[linktext={opt. text shown instead of the URL}, cite=DBLP:books/mk/GrayR93]{Classification (e.g. Full Version, Extended Version, Previous Version}{URL to related version} %linktext and cite are optional

%\supplement{}%optional, e.g. related research data, source code, ... hosted on a repository like zenodo, figshare, GitHub, ...
%\supplementdetails[linktext={opt. text shown instead of the URL}, cite=DBLP:books/mk/GrayR93, subcategory={Description, Subcategory}, swhid={Software Heritage Identifier}]{General Classification (e.g. Software, Dataset, Model, ...)}{URL to related version} %linktext, cite, and subcategory are optional

%\funding{(Optional) general funding statement \dots}%optional, to capture a funding statement, which applies to all authors. Please enter author specific funding statements as fifth argument of the \author macro.

\acknowledgements{The author thanks Petru Pascu, Radu Vintan, and Altan Birler for their helpful feedback and the anonymous reviewers whose detailed comments improved the overall presentation.}%optional

\usepackage{amsthm}
\newcommand{\sparagraph}[1]{\vspace{1mm}\noindent {\bf #1}}
% \newcommand{\polylog}{\mathrm{polylog}}
% \newcommand{\poly}{\mathrm{poly}}
% \newcommand{\NP}{\mathrm{NP}}

% Source: https://arxiv.org/abs/1911.06132
\makeatletter
\newcommand{\ovee}{\mathbin{\mathpalette\make@circled\vee}}
\newcommand{\make@circled}[2]{%
  \ooalign{$\m@th#1\smallbigcirc{#1}$\cr\hidewidth$\m@th#1#2$\hidewidth\cr}%
}
\newcommand{\smallbigcirc}[1]{%
  \vcenter{\hbox{\scalebox{0.82}{$\m@th#1\bigcirc$}}}%
}

% Source: https://tex.stackexchange.com/questions/198771/align-in-substack
\makeatletter
\newcommand{\subalign}[1]{%
  \vcenter{%
    \Let@ \restore@math@cr \default@tag
    \baselineskip\fontdimen10 \scriptfont\tw@
    \advance\baselineskip\fontdimen12 \scriptfont\tw@
    \lineskip\thr@@\fontdimen8 \scriptfont\thr@@
    \lineskiplimit\lineskip
    \ialign{\hfil$\m@th\scriptstyle##$&$\m@th\scriptstyle{}##$\hfil\crcr
      #1\crcr
    }%
  }%
}
\makeatother

% \newtheorem{theorem}{Theorem}
% TODO: uncomment to disable numbering
%\nolinenumbers

% TODO: activate this
%Editor-only macros:: begin (do not touch as author)%%%%%%%%%%%%%%%%%%%%%%%%%%%%%%%%%%
% \EventEditors{John Q. Open and Joan R. Access}
% \EventNoEds{2}
% \EventLongTitle{42nd Conference on Very Important Topics (CVIT 2016)}
% \EventShortTitle{CVIT 2016}
% \EventAcronym{CVIT}
% \EventYear{2016}
% \EventDate{December 24--27, 2016}
% \EventLocation{Little Whinging, United Kingdom}
% \EventLogo{}
% \SeriesVolume{42}
% \ArticleNo{23}
%%%%%%%%%%%%%%%%%%%%%%%%%%%%%%%%%%%%%%%%%%%%%%%%%%%%%%

\begin{document}

\maketitle

%TODO mandatory: add short abstract of the document
\begin{abstract}
Exponential-time approximation has recently gained attention as a practical way to deal with the bitter \textsf{NP}-hardness of well-known optimization problems. We study for the first time the $(1 + \eps)$-approximate min-sum subset convolution. This enables exponential-time $(1 + \eps)$-approximation schemes for problems such as minimum-cost $k$-coloring, the prize-collecting Steiner tree, and many others in computational biology. Technically, we present both a weakly- and strongly-polynomial approximation algorithm for this convolution, running in time $\widetilde O(2^n \log M / \eps)$ and $\widetilde O(2^\frac{3n}{2} / \sqeps)$, respectively. Our work revives research on tropical subset convolutions after nearly two decades.
\end{abstract}

\newpage
\section{Introduction}\label{sec:introduction}

Fast subset convolution is one of the tools in parameterized algorithms which made their way in many of the dynamic programming solutions to well-known \NP-hard problems~\cite{cygan2015algebraic}. Given functions $f$ and $g$ defined on the subset lattice of order $n$, their \emph{sum-product} subset convolution is defined for all $S \subseteq [n] \vcentcolon= \{1, \ldots, n\}$ as
\[
h(S) = (f \ast g)(S) = \displaystyle\sum_{T \subseteq S} f(T)g(S \setminus T).    
\]
Its prominence does not yet come to a surprise since many computationally hard problems accept a convolution-like shape. The remarkable reduction in time-complexity from $O(3^n)$ to $O(2^n n^2)$ by Bj\"orklund et al.~\cite{fsc} represented indeed a breakthrough, allowing these problems to be solved faster. These problems, however, do indeed reduce to a slightly different subset convolution, namely the \emph{min-sum} subset convolution, defined for all $S \subseteq [n]$ as
\begin{equation}
  h(S) = (f \star g)(S) = \min_{T \subseteq S}\big(f(T) + g(S \setminus T)\big).
  \label{eq:min_sum_subset_conv}
\end{equation}
The naïve algorithm inherently takes $O(3^n)$-time, as in the case of any subset convolution: For each set $S$, iterate over all its subsets $T$.\footnote{The running time can be derived from $\sum_{k = 0}^{n} {n \choose k} 2^k = (1 + 2)^n = 3^n$.} In contrast to the speedup achieved by Bj\"orklund et al.~\cite{fsc} for the sum-product subset convolution, the naïve algorithm is still the fastest known for the min-sum subset convolution. This is also the reason why min-sum subset convolution is not used ``as is'', but in a two-step approach: \texttt{(i)} embed the min-sum semi-ring into the sum-product ring, \texttt{(ii)} perform the fast subset convolution in this ring instead. This results in an $\widetilde O(2^n M)$-time algorithm~\cite{fsc}, where $M$ is the largest input value.

This workaround has two limitations: First, the input functions must have a bounded integer range $\{-M, \ldots, M\}$, and second, the running time of the final algorithm -- $\widetilde O(2^n M)$ -- depends on $M$, making the algorithm a pseudo-polynomial one.\footnote{Note that the input size itself is on the order of $O(2^n)$.} This is also where the story behind the min-sum subset convolution ends. Is that all? In what follows, we argue that there is more to it.

\subsection{Approximating Min-Sum Subset Convolution}

Exponential-time approximation algorithms for \NP-hard problems have recently attracted much attention~\cite{exp_approx_bansal2019new,exp_approx_bourgeois2011approximation,exp_approx_escoffier2016super,exp_approx_esmer2022faster, exp_approx_esmer2023approximate, exp_approx_can2024optimally, exp_approx_manurangsi2018mildly}. In the light of this development, we propose the $(1 + \eps)$-\emph{approximate} variant of the min-sum subset convolution as a ``Swiss Army knife'' that enables exponential-time approximation for many \NP-hard optimization problems: Compute the set function $\widetilde h$ such that for all $S \subseteq [n]$ the following holds:
\[
(f \star g)(S) \leq \widetilde h(S) \leq (1 + \eps)(f \star g)(S).
\]
If we could devise a faster-than-naïve algorithm for the approximate counterpart, this would enable $(1 + \eps)$-approximation schemes for a plethora of computationally hard problems that use min-sum subset convolution as a primitive. Typically, these are  problems on graphs such as the minimum Steiner tree problem~\cite{dreyfus1971steiner}, coloring~\cite{convex_recoloring}, and spanning problems on hypergraphs~\cite{spanning_hypergraphs, msth}. On the application side, min-sum subset convolution and its counterpart, \emph{max-sum} subset convolution,\footnote{The operations are performed in the $(\max, +)$-semi-ring.} are also present in computational biology~\cite{protein_network, colorful_subtrees}. This leads us to our driving research question:
\begin{quote}
\itshape
\centering
Are there faster-than-naïve $(1+\eps)$-approximation algorithms for min-sum subset convolution?
\end{quote}
On another note, Bj\"orklund et al.~\cite{fsc} showed that, assuming a bounded input, one can improve on decades-old stagnant results. The most notable of these was the $\widetilde O(2^kn^2M + nm\log M)$-time algorithm for the minimum Steiner tree in graphs with $n$ vertices, $k$ terminals, and $m$ edges with integer weights bounded by $M$. This is indeed the case for all other hard problems that reduce to the min-sum subset convolution: They are all ``scapegoats'' of the bounded-input assumption, having to accept an additional factor of $M$ in their time-complexities. This leads us to our next question:
\begin{quote}
\itshape
\centering
Are there faster-than-naïve $(1 + \eps)$-approximation schemes for convolution-like \NP-hard optimization problems with running time independent of M?
\end{quote}

Next, we present our results.

\subsection{Our results}

\sparagraph{Overview.}~We answer our driving questions in the affirmative. The way we answer them is also intriguing in itself. Let us review it briefly. First, we bring together two lines of research that have been so far considered separately:~\emph{sequence} and \emph{subset} convolutions in semi-rings. This gives us both the weakly- and the strongly-polynomial algorithms for the $(1+\eps)$-approximate min-sum subset convolution. Then, using these algorithms and a standard approximation technique, we obtain $(1 + \eps)$-approximation algorithms for several convolution-like \NP-hard optimization problems. This enriches the current toolbox in exponential-time approximation~\cite{exp_approx_bansal2019new,exp_approx_bourgeois2011approximation,exp_approx_escoffier2016super,exp_approx_esmer2022faster, exp_approx_esmer2023approximate, exp_approx_can2024optimally, exp_approx_manurangsi2018mildly}.

\sparagraph{Min-Sum Subset Convolution.}~Moreover, we revive the line of research on min-sum subset convolution after nearly two decades, by presenting both a weakly- and a strongly-polynomial $(1 + \eps)$-approximation algorithm for this problem: 
\begin{restatable}[]{theorem}{WeaklyPolyMinSumSubsetConvolution}
    \label{thm:weakly_approx_min_sum_subset_convolution}
    $(1+\eps)$-Approximate min-sum subset convolution can be solved in $\widetilde O(2^n \log M / \eps)$-time.
\end{restatable}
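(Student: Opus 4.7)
My plan is to reduce the $(1+\eps)$-approximate min-sum subset convolution to a short sequence of exact sum-product subset convolutions via $(1+\eps)$-rounding together with a scale decomposition.

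First, I would round each input value up to the nearest power of $(1+\eps)$: set $\hat f(T) = (1+\eps)^{\lceil \log_{1+\eps} f(T) \rceil}$ and analogously $\hat g$. By monotonicity of $\star$ we have $(f \star g)(S) \leq (\hat f \star \hat g)(S) \leq (1+\eps)(f \star g)(S)$, so it suffices to compute $(\hat f \star \hat g)$ exactly on inputs whose values all lie in $\{(1+\eps)^0, \ldots, (1+\eps)^L\}$ with $L = \lceil \log_{1+\eps} M \rceil = O(\log M / \eps)$.

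Next, I would decompose by scale. For each $\ell \in \{0, \ldots, L\}$, I would consider the contribution to $(\hat f \star \hat g)(S)$ from partitions $(T, U)$ whose dominant summand has scale $\ell$, i.e., $\max(\hat f(T), \hat g(U)) = (1+\eps)^\ell$. Within such a scale both summands lie in $[0, 2(1+\eps)^\ell]$, so after rescaling by $\eps (1+\eps)^\ell$ they become integers in $\{0, 1, \ldots, O(1/\eps)\}$. The resulting per-scale sub-problem is an exact min-sum subset convolution on a small-range input, which I would handle through the $(+, \cdot)$-ring embedding of Bj\"orklund et al.~\cite{fsc}. Combining across the $L + 1$ scales by a pointwise minimum then recovers $(\hat f \star \hat g)(S)$.

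The binding constraint for hitting the claimed $\widetilde O(2^n \log M / \eps)$ bound is that the per-scale cost must be $\widetilde O(2^n)$ rather than $\widetilde O(2^n / \eps)$; the latter, summed over $L$ scales, would give only $\widetilde O(2^n \log M / \eps^2)$. I expect this to be the principal technical obstacle. To shave the extra $1/\eps$ factor I would bring together the sequence- and subset-convolution worlds, as advertised in the overview: specifically, I would plug a $(1+\eps)$-approximate min-plus \emph{sequence} convolution into the inner rank-sum step of the ranked zeta-transform framework of~\cite{fsc}, so that polynomials of degree $O(1/\eps)$ arising per scale are handled by an approximate sequence primitive rather than by exact $O(1/\eps)$-time multiplication. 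The main subtlety is then to argue that the two layers of approximation (outer rounding and inner sequence approximation) compose to a single overall $(1+\eps)$-factor, rather than $(1+\eps)^2$ or worse, which can be arranged by running the inner primitive with parameter $\eps' = \Theta(\eps)$ and absorbing the constants.
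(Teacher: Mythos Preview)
Your decomposition creates $L = O(\log M / \eps)$ scales, and you correctly identify that this yields only $\widetilde O(2^n \log M / \eps^2)$. The proposed fix---replacing polynomial multiplication inside the ranked zeta/M\"obius framework by an approximate min-plus sequence primitive---does not work. The fast subset convolution of~\cite{fsc} operates in the sum-product \emph{ring}: the ranked zeta transform \emph{sums} monomials $x^{f(T)}$ over $T\subseteq S$, and recovering the subset convolution from the cover product requires M\"obius inversion, i.e., \emph{subtraction} of coefficients. Swapping the inner product for a $(\min,+)$ operation discards the coefficient information on which the inversion relies, so the output no longer encodes the disjoint-union convolution. (Even setting this aside, any primitive on length-$O(1/\eps)$ sequences must at least read its input, so the per-scale cost cannot drop to $\widetilde O(2^n)$ this way.)

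The actual fix is much simpler and lives in the scale decomposition, not in the inner primitive. Drop the initial rounding to powers of $(1+\eps)$ entirely and iterate over only $O(\log M)$ scales $q \in \{1, 2, 4, \ldots, 2^{\lceil \log 2M\rceil}\}$. For each $q$, map $f(S) \mapsto \lceil 2f(S)/(\eps q) \rceil$ if this is at most $\lceil 4/\eps\rceil$ and to $\infty$ otherwise (likewise for $g$), call the exact bounded-range min-sum subset convolution with bound $\lceil 4/\eps\rceil$, scale back by $\eps q/2$, and keep the pointwise minimum over rounds. For the round $q$ with $q \le h(S) < 2q$, a standard rounding lemma (cf.~\cite[Lemma~6.2]{partition_karol}) gives a value in $[h(S), (1+\eps)h(S)]$. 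This is $O(\log M)$ calls at $\widetilde O(2^n/\eps)$ each, giving the claimed bound directly---precisely the paper's Algorithm~\ref{algo:weakly_polynomial_conv}. Your initial $(1+\eps)$-rounding step is what forced the extra $1/\eps$ in the number of scales; the per-scale rounding already supplies the $(1+\eps)$ approximation, so no separate global rounding is needed.
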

\begin{restatable}[]{theorem}{StronglyPolyMinSumSubsetConvolution}
    \label{thm:strongly_approx_min_sum_subset_convolution}
    $(1+\eps)$-Approximate min-sum subset convolution can be solved in $\widetilde O(2^\frac{3n}{2} / \sqeps)$-time.
\end{restatable}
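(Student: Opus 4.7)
My approach relies on a strongly-polynomial $(1+\eps)$-approximate $(\min,+)$-sequence convolution primitive (e.g., Chi--Duan--Xie--Zhang, STOC 2022), which $(1+\eps)$-approximates the $(\min,+)$-convolution of two length-$\ell$ nonnegative sequences in $\widetilde O(\ell/\sqeps)$ time. The $1/\sqeps$ factor in Theorem~\ref{thm:strongly_approx_min_sum_subset_convolution} matches this primitive's dependence; the $2^{3n/2}$ factor arises from a ground-set split reduction.

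\noindent\textbf{Ground-set split.} Partition $[n]$ into two halves $L \sqcup R$ of size $n/2$ and decompose $S = S_L \cup S_R$, $T = T_L \cup T_R$. Setting $(A, B) := (T_L,\, S_L \setminus T_L)$,
\[
    (f\star g)(S) \;=\; \min_{A \sqcup B = S_L}\,\min_{T_R \subseteq S_R}\Big(f(A \cup T_R) + g\big(B \cup (S_R \setminus T_R)\big)\Big).
\]
For each of the $3^{n/2}$ ordered disjoint pairs $(A, B)$ with $A \cup B \subseteq L$, the inner minimization, viewed as a function of $S_R$, is the min-sum subset convolution on $R$ of the two length-$2^{n/2}$ vectors $f(A \cup \cdot)$ and $g(B \cup \cdot)$.

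\noindent\textbf{Reducing each instance to the sequence primitive.} For each such pair I would approximate the right-half subset convolution in $\widetilde O(2^{n/2}/\sqeps)$ time by adapting the rank decomposition of Bj\"orklund et al.\ on $R$. After splitting by rank $(k, k')$ on $R$, disjointness becomes automatic whenever $k + k' = |S_R|$ (since then $|A \cup B| = |A| + |B|$), so the rank-restricted OR-convolution in the $(\min,+)$-semi-ring at these ranks coincides with the rank-restricted subset convolution. Each rank pair is then handled by the sequence primitive after encoding subsets of $R$ as integers; a termwise min over the $O(n)$ ranks, with the primitive invoked at $\eps' = \Theta(\eps/n)$, preserves the $(1+\eps)$-multiplicative guarantee.

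\noindent\textbf{Accounting and main obstacle.} Over all $3^{n/2}$ pairs $(A, B)$ the total cost is $\widetilde O(6^{n/2}/\sqeps) \subseteq \widetilde O(2^{3n/2}/\sqeps)$, with the final per-$S$ $\min$ over the partitions $(A, B)$ of $S_L$ adding only $O(6^{n/2})$. The main obstacle is that the $(\min,+)$-semi-ring admits no M\"obius inverse, so the standard Bj\"orklund et al.\ recipe does not port directly; I expect to circumvent this by combining the exact min-zeta transform (which does commute with $+$) with a scaling argument inspired by Theorem~\ref{thm:weakly_approx_min_sum_subset_convolution}, trading exact inversion for a controlled multiplicative error that lets each rank-restricted OR-convolution be realized as a single length-$2^{n/2}$ sequence convolution.
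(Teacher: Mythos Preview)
Your approach has a genuine gap at the reduction step. The ground-set split is fine: it correctly reduces the problem to $3^{n/2}$ instances of min-sum \emph{subset} convolution on the right half $R$ (universe size $n/2$, input length $2^{n/2}$). The failure is in handling each such instance in $\widetilde O(2^{n/2}/\sqeps)$ time via an approximate min-plus \emph{sequence} primitive. Encoding subsets of $R$ as integers in $\{0,\ldots,2^{n/2}-1\}$ does not turn subset convolution into sequence convolution: integer addition of bitmasks (with carries) has nothing to do with disjoint set union. Your rank observation---that $|T|+|S\setminus T|=|S|$ forces disjointness, so the rank-restricted $(\min,+)$ OR-convolution agrees with subset convolution---is correct, but that OR-convolution is still indexed by the subset lattice, not by integers under addition, so no sequence primitive applies. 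The sketched ``min-zeta plus scaling'' fix does not close this: the min-zeta transform $\hat f(S)=\min_{T\subseteq S}f(T)$ has no inverse in $(\min,+)$, and pointwise addition $\hat f_k(S)+\hat g_{k'}(S)$ lets the two minimizers overlap, so it does not recover the rank-restricted OR-convolution. In fact, if you could solve approximate min-sum subset convolution on a universe of size $m$ in $\widetilde O(2^{m}/\sqeps)$ time---which is exactly what you assert for $m=n/2$---then taking $|L|=0$ would already give $\widetilde O(2^{n}/\sqeps)$ for the full problem, far stronger than the theorem you are trying to prove and not known.

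For contrast, the paper never touches sequence convolution. It first builds an $\widetilde O(2^{3n/2})$ algorithm for \emph{exact min-max} subset convolution (Kosaraju-style $\sqrt{2^n}$-chunking, using boolean sum-product subset convolution as the feasibility oracle), and then instantiates the Bringmann--K\"unnemann--W\k{e}grzycki framework directly on the subset lattice: distant summands via $\polylog(2^n/\eps)$ calls to min-max subset convolution, and close summands via a scaling loop that in each round takes the better of naive enumeration over non-$\infty$ entries and the $\widetilde O(2^n/\eps)$ bounded-value exact algorithm of Bj\"orklund et al., balanced with threshold $\lambda=(2^n/\eps)^{1/2}$.
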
%
Note that the strongly-polynomial algorithm (Thm.~\ref{thm:strongly_approx_min_sum_subset_convolution}) outperforms its weakly-polynomial counterpart (Thm.~\ref{thm:weakly_approx_min_sum_subset_convolution}) for huge co-domains.\footnote{Indeed, a sufficient condition is $M = \omega(2^{\sqrt{\eps 2^{n + \delta}}})$, for any $\delta > 0$.} While our weakly-polynomial algorithm uses the scaling technique, as does the fastest algorithm for the $(1 + \eps)$-approximate min-plus \emph{sequence} convolution~\cite{partition_karol}, obtaining the strongly-polynomial algorithm requires a detour in another semi-ring. Namely, in the course of proving it, we had to design an algorithm for a more curious convolution, namely the \emph{min-max} subset convolution, defined for all $S \subseteq [n]$ as
\[
h(S) = (f \ovee g)(S) = \min_{T \subseteq S} \max \{f(T), g(S \setminus T)\}.
\]
To the best of our knowledge, the \emph{min-max} subset convolution has not yet been present in the literature. As a by-product, we show that its exact evaluation does indeed break the natural $O(3^n)$-time barrier:

\begin{table}[htbp]
    \centering
    \begin{tabular}{cccc}
        \toprule
        \textbf{Reference} & \textbf{Type} & \textbf{(Semi-)ring} & \textbf{Running Time} \\
        \midrule
        ad-hoc & exact & any & $O(3^n)$ \\
        Bj\"orklund et al.~\cite{fsc} & exact & $(+, \times)$ & $O(2^n n^2)$ \\
        Bj\"orklund et al.~\cite{fsc} & exact & $(\min, +)$ & $\widetilde O(2^n M)$ \\
        Thm.~\ref{thm:min_max_subset_convolution}, \emph{this work} & exact & $(\min, \max)$ & $\widetilde O(2^\frac{3n}{2})$ \\
        Thm.~\ref{thm:weakly_approx_min_sum_subset_convolution}, \emph{this work} & $(1+\eps)$-approx. & $(\min, +)$ & $\widetilde O(2^n \log M / \eps)$ \\
        Thm.~\ref{thm:strongly_approx_min_sum_subset_convolution}, \emph{this work} & $(1+\eps)$-approx. & $(\min, +)$ & $\widetilde O(2^\frac{3n}{2} / \sqrt{\eps})$ \\
        \bottomrule
    \end{tabular}
    % \newline
    \captionsetup{justification=centering}
    \caption{Reviving research on tropical subset convolutions.}
    \label{tab:line_of_research}
\end{table}
% NOTE: LET THIS SPACING HERE!
\begin{restatable}[]{theorem}{MinMaxSubsetConvolution}
    \label{thm:min_max_subset_convolution}
    Exact min-max subset convolution can be solved in $\widetilde O(2^\frac{3n}{2})$-time.
\end{restatable}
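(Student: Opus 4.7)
The plan is to attack the problem by thresholding and reducing it to a small number of sum-product subset convolutions (Bj\"orklund et al.~\cite{fsc}), layered on a square-root block decomposition of the sorted value list, with a cheap direct scan inside each block. More precisely, let the combined multiset $\{f(T)\}_{T \subseteq [n]} \cup \{g(U)\}_{U \subseteq [n]}$ be sorted as $v_1 \le v_2 \le \cdots \le v_N$, with $N \le 2^{n+1}$. For any threshold $v$ the Boolean indicators $F_v(T) := [f(T) \le v]$ and $G_v(U) := [g(U) \le v]$ satisfy $h(S) \le v$ iff $(F_v \ast G_v)(S) > 0$, where $\ast$ is the ordinary sum-product subset convolution; one such query is answered for all $S$ simultaneously in $\widetilde O(2^n)$ time by~\cite{fsc}.

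Choosing block size $B := \lceil \sqrt{N}\,\rceil = \Theta(2^{n/2})$ and running the convolution at every block endpoint $\tau_j := v_{jB}$ assigns each $S$ to a unique block $j(S)$ with $h(S) \in (\tau_{j(S)-1}, \tau_{j(S)}]$, at total cost $(N/B)\cdot \widetilde O(2^n) = \widetilde O(2^{3n/2})$. Inside block $j$, let $F_j^{\mathrm{new}}$ (resp.\ $G_j^{\mathrm{new}}$) be the sets $T$ (resp.\ $U$) whose value falls in $(\tau_{j-1}, \tau_j]$, so $|F_j^{\mathrm{new}}| + |G_j^{\mathrm{new}}| = B$. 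For every $S$ with $j(S) = j$, I would scan over $T \in F_j^{\mathrm{new}}$ and retain $f(T)$ as a candidate whenever $T \subseteq S$ and $g(S \setminus T) \le f(T)$, and symmetrically over $U \in G_j^{\mathrm{new}}$ with candidate $g(U)$ when $U \subseteq S$ and $f(S \setminus U) \le g(U)$. Setting $h(S)$ to the smallest candidate uses $O(B)$ time per $S$ via bitmask tests and table lookups, so the refinement totals $\sum_j |\{S:j(S)=j\}|\cdot B = 2^n \cdot B = \widetilde O(2^{3n/2})$.

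Correctness rests on a ``witness-in-new'' observation: an optimum $(T^*, S \setminus T^*)$ has $h(S) = \max(f(T^*), g(S \setminus T^*)) \in (\tau_{j(S)-1}, \tau_{j(S)}]$, so whichever side attains the max is itself in $F_{j(S)}^{\mathrm{new}}$ or $G_{j(S)}^{\mathrm{new}}$ and the scan encounters the witness; every accepted candidate, in turn, encodes a genuine partition of $S$ and so upper-bounds $h(S)$. The main obstacle I expect is tightening this argument with careful tie-breaking between equal $f$- and $g$-values and at block boundaries, and showing that the candidates ``missed'' by the $T$-scan (those with $g(S \setminus T) > f(T)$) are harmless because their value is either captured by the $U$-scan or strictly exceeds $\tau_{j(S)} \ge h(S)$. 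Once this invariant is in place, the block size $B = \sqrt{N}$ is forced by balancing the coarse cost $(N/B)\cdot 2^n$ against the refinement cost $2^n\cdot B$, giving $\widetilde O(2^{3n/2})$.
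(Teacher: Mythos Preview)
Your proposal is correct and follows essentially the same approach as the paper: a Kosaraju-style square-root block decomposition of the sorted value list, with Boolean sum-product subset convolutions at the block boundaries to locate the block containing $h(S)$, followed by an $O(B)$-time direct scan of the new entries inside that block. Your explicit split into $F_j^{\mathrm{new}}$ and $G_j^{\mathrm{new}}$ with the side-condition filter is a slightly more careful rendering of the paper's inner loop (which just evaluates $\max\{f(U),g(S\setminus U)\}$ for every $U$ in the chunk), and the tie-breaking issue you flag is indeed the only subtlety.
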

This generalizes Kosaraju's algorithm~\cite{kosaraju} for the min-max sequence convolution to the subset lattice. We need this intermediate result so that we can instantiate the recent framework of Bringmann, K\"unnemann, and Węgrzycki~\cite{approx_min_plus}, used to obtain the first strongly-polynomial algorithm for the $(1 + \eps)$-approximate min-plus sequence convolution. In particular, they also showed the equivalence between exact min-max sequence convolution and the latter.\footnote{The current version of \cite[Thm.~8.2]{approx_min_plus} contains a typo, relating the min-plus sequence convolution to the min-max \emph{product}. The theorem actually refers to convolution, as can be seen from its proof.} We show that this result naturally holds in the subset setting as well, a missing piece in their work that has immediate application to well-known \NP-hard optimization problems.

% NOTE: DON'T CHANGE THIS SPACING!
\begin{restatable}[Extension of~{\cite[Thm.~1.5]{approx_min_plus}}]{theorem}{Equivalence}
\label{thm:equivalence}
For any $c \geq 1$, if one of the following statements is true, then both are:
\begin{itemize}
    \item $(1 + \eps)$-Approximate \textit{min-sum} subset convolution can be solved in time $\widetilde O(2^{cn} / \poly(\eps))$,
    \item Exact \textit{min-max} subset convolution can be solved in time $\widetilde O(2^{cn})$.\\
\end{itemize}
\end{restatable}
Simply using the main backbone of their framework, the Sum-to-Max-Covering lemma~\cite[Thm.~1.7]{approx_min_plus} (we define it later in Sec.~\ref{sec:approx_min_sum_subset_conv}), one could obtain an $\widetilde O(2^\frac{3n}{2} / \eps)$-time algorithm for the approximate min-sum subset convolution; note that the running time is independent of $M$. However, the dependence on the parameter $\eps$ can indeed be improved using a refined analysis. In Table~\ref{tab:line_of_research}, we summarize our results, which revive the research on tropical subset convolutions after nearly two decades and enable out-of-the-box exponential-time $(1 + \eps)$-approximation schemes for several hard optimization problems, as outlined in the following.

\sparagraph{Applications.}~Min-sum subset convolution is present in the dynamic programming formulation of many \NP-hard optimization problems. Thus, we obtain $(1 + \eps)$-approximation schemes for these problems as well, such as the minimum-cost $k$-coloring~\cite{Cygan_book}, the prize-collecting Steiner tree, and two applications in computational biology: the maximum colorful subtree problem~\cite{colorful_subtrees}, which instead requires the \emph{max-sum} subset convolution -- for which we devise an $(1 - \eps)$-approximation scheme -- and a problem on protein interaction networks~\cite{protein_network} (already proposed as an application by Bj\"orklund et al.~\cite{fsc}). We outline two of these results here:

\begin{theorem}
\label{thm:min_cost_k_coloring}
$(1 + \eps)$-Approximate minimum-cost $k$-coloring can be solved in $\widetilde O(2^\frac{3n}{2} / \sqeps)$-time.
\end{theorem}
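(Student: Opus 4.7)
The plan is to reduce minimum-cost $k$-coloring to $k-1 \leq n-1$ successive $(1+\delta)$-approximate min-sum subset convolutions and apply Theorem~\ref{thm:strongly_approx_min_sum_subset_convolution}, picking $\delta$ so that the compounded multiplicative error stays within the target $(1+\eps)$ factor.

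First I would set up the standard DP underlying the $k$-coloring problem~\cite{Cygan_book}: precompute, for every $S \subseteq V$, the value $f(S)$ equal to the coloring cost of $S$ as a single color class whenever $S$ is an independent set in $G$, and $+\infty$ otherwise; this fits in $O(2^n \poly(n))$-time. Defining $h_j(S)$ as the minimum cost of partitioning $S$ into $j$ independent sets, one has $h_1 = f$ and $h_j = h_{j-1} \star f$ for $j \geq 2$, so the answer is $h_k(V)$, obtained after $k-1$ min-sum subset convolutions.

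Next, I would replace each $\star$ by the $(1+\delta)$-approximate operator from Theorem~\ref{thm:strongly_approx_min_sum_subset_convolution} with $\delta \vcentcolon= \eps/(2n)$. A short induction establishes $h_j \leq \widetilde h_j \leq (1+\delta)^{j-1} h_j$ pointwise for the resulting iterates: the lower bound follows because the approximate operator never underestimates and $\widetilde h_{j-1} \geq h_{j-1}$, while the upper bound exploits $f \geq 0$ to factor $(1+\delta)^{j-2}$ out of $\widetilde h_{j-1}(T) + f(S\setminus T)$ inside the minimum. After $j = k \leq n$ rounds the compounded factor is at most $(1+\delta)^n \leq e^{\eps/2} \leq 1+\eps$ for $\eps \leq 1$, so $\widetilde h_k(V)$ is a valid $(1+\eps)$-approximation.

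Finally, I would bound the running time: each of the $O(n)$ approximate convolutions costs $\widetilde O(2^{3n/2}/\sqrt{\delta}) = \widetilde O(2^{3n/2}\sqrt{n/\eps})$ by Theorem~\ref{thm:strongly_approx_min_sum_subset_convolution}, and both the outer $O(n)$ iteration and the $\sqrt n$ factor are absorbed by $\widetilde O(\cdot)$, giving the claimed $\widetilde O(2^{3n/2}/\sqeps)$. The main obstacle -- and the only step not immediate from prior art -- is the multiplicative error-compounding argument above, which must be stated carefully because the $+$ inside the convolution does not split multiplicatively without invoking non-negativity of $f$; everything else is a direct assembly of the DP and the approximation primitive.
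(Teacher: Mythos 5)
Your approach is essentially the same as the paper's: the paper first proves an intermediate result (Theorem~\ref{thm:min_cost_coloring_inter}) stating that an $(1+\eps)$-approximation follows by running the approximate convolution $k-1$ times with per-call error $\delta = \Theta(\eps/(k-1))$, and then derives the stated theorem as a corollary; you pick $\delta = \eps/(2n)$, which (since $k \le n$) is the same up to constants, and correctly note that the resulting $\sqrt{n}$ factor is absorbed by $\widetilde O(\cdot)$. Your explicit pointwise induction $h_j \le \widetilde h_j \le (1+\delta)^{j-1} h_j$, using non-negativity to factor $(1+\delta)^{j-2}$ out of the sum inside the minimum, is a more careful rendering of the paper's terse ``cumulative relative error bounded by $(1+\delta)^{k-1}$''. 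One small slip in your DP setup: in minimum-cost $k$-coloring the cost $c(v,i)$ depends on the color $i$, so a single set function $f$ does not capture the problem; you need $k$ distinct functions $s_1,\ldots,s_k$ (one per color, $s_i(X) = \sum_{x\in X}c(x,i)$ if $X$ is independent and $+\infty$ otherwise) and the recurrence $(s_1 \star \cdots \star s_k)(V)$ as in the paper. This does not affect your error-compounding argument, which carries over verbatim to the chain of $k$ distinct functions, but the proof as written solves only the color-agnostic special case.
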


\begin{theorem}
\label{thm:prize_collecting_steiner_tree}
$(1 + \eps)$-Approximate prize-collecting Steiner tree can be solved in $\widetilde O(2^\frac{3s^+}{2} / \sqeps)$-time.\footnote{The parameter $s^+$ is the number of proper potential terminals; see Appendix~\ref{appendix:prize_collecting} for its definition.}
\end{theorem}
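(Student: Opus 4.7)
The plan is to instantiate the standard Dreyfus--Wagner style dynamic program for prize-collecting Steiner tree over the set of proper potential terminals, and then replace the bottleneck min-sum subset convolution step with the strongly-polynomial approximate version from Thm.~\ref{thm:strongly_approx_min_sum_subset_convolution}. Concretely, one defines a table $f(v, S)$, for $v \in V$ and $S$ a subset of the $s^+$ proper potential terminals, representing the minimum net cost (edge weights minus prizes, plus penalties for excluded terminals, according to the precise formulation given in Appendix~\ref{appendix:prize_collecting}) of an optimal tree containing $v$ and the terminals in $S$. Standard reasoning gives two recursions: an ``edge-extension'' recursion $f(v,S) = \min_u \{f(u,S) + d(v,u)\}$ computed via shortest paths, and a ``tree-split'' recursion that for each fixed root $v$ has precisely the shape of a min-sum subset convolution over $2^{s^+}$.

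Next, I would replace each such tree-split step by an application of Thm.~\ref{thm:strongly_approx_min_sum_subset_convolution} with parameter $\eps' := \eps / c s^+$ for a suitable constant $c$. The edge-extension step is exact (a min-plus matrix operation on already computed values), so it preserves the current multiplicative error. Each convolution call costs $\widetilde O(2^{3 s^+ / 2} / \sqrt{\eps'}) = \widetilde O(2^{3 s^+ / 2} / \sqeps)$, and there are only $\mathrm{poly}(|V|, s^+)$ such calls across the DP (one per $(v,S)$ level, interleaved with edge-extensions), so the total running time matches the claimed $\widetilde O(2^{3 s^+ / 2} / \sqeps)$ bound up to polylogarithmic factors hidden in $\widetilde O$.

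The main obstacle is error control across the layers of the DP. Because both recursions are compositions in the $(\min,+)$-semiring, and since each approximate convolution introduces a $(1+\eps')$-multiplicative factor, a tree of optimum of depth $O(s^+)$ yields an overall approximation ratio of $(1+\eps')^{O(s^+)} \leq 1 + \eps$ by the choice of $\eps'$. A subtle point is to argue that the optimum is indeed witnessed by a recursion tree of depth $O(s^+)$, so that the product of errors telescopes correctly; this follows from the structural property that any optimum Steiner tree over $|S|$ terminals decomposes in at most $|S|$ split operations, and edge-extensions do not accumulate multiplicative error. Combining these pieces yields the theorem.
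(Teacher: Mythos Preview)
Your proposal is correct and follows essentially the same approach as the paper: instantiate the Rehfeldt--Koch layer-wise Dreyfus--Wagner-style DP over the $s^+$ proper potential terminals, replace each tree-split step by the approximate min-sum subset convolution of Thm.~\ref{thm:strongly_approx_min_sum_subset_convolution}, and shrink $\eps$ by a polynomial factor to control the compounded error. The only minor difference is that the paper conservatively sets $\eps' = \eps/(s^{+}n)$ (one factor per convolution call), whereas you more sharply observe that error compounds only along the $O(s^+)$ depth of the recursion; both choices give the same final $\widetilde O(2^{3s^+/2}/\sqeps)$ bound.
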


% NOTE: DON'T CHANGE THIS SPACING!
\subsection{Related Work}

Our work builds a new bridge from sequence convolutions to subset convolutions. The sequence convolution of two sequences $a$ and $b$ in the $(\oplus, \otimes)$-semi-ring is computed as $c_k=\oplus_{i + j = k}\:(a_i \otimes b_j)$, while the subset convolution of two set functions $f$ and $g$ in the same semi-ring is defined as $h(S) = \oplus_{T\subseteq S}\big(f(T) \otimes g(S\setminus T)\big)$. Independently on the structure of the convolution, i.e., either on sequences or on subsets, there are three types of sequence convolution which predominate in the literature, corresponding to three different (semi-)rings: (a) $(+, \times)$-ring, (b) $(\min, \max)$-semi-ring, and (c) $(\min, +)$-semi-ring, respectively. In the following, we outline previous work for both convolution types.

\sparagraph{Sequence Convolution.}
\begin{enumerate}[(a)]
    \item The sequence convolution in the $(+, \times)$-ring can be solved in time $O(n \log n)$ via FFT.
    \item In the $(\min, \max)$-semi-ring, Kosaraju presented an $\widetilde O(n \sqrt n)$-time algorithm~\cite{kosaraju}, and even conjectured that his algorithm can be improved to $\widetilde{O}(n)$. However, no improvement has been reported so far~\cite{approx_min_plus}.
    \item Whether Min-Plus Sequence Convolution can be computed in time $O(n^{2 - \delta})$ for any $\delta > 0$ is still an open problem~\cite{cygan_conjecture_1, kuennnemann_conjecture_2}. The fastest algorithm to date runs in time $n^2 / 2^{\Omega(\sqrt{\log n})}$, by combining the reduction to Min-Plus Matrix Product by Bremner et al.~\cite{necklaces} and an algorithm for the latter due to Williams~\cite{williams_first}, subsequently derandomized by Chan and Williams~\cite{chan_williams}. In case the values are bounded by a constant $W$, the convolution can be performed in $\widetilde O(n W)$-time~\cite[Lemma 5.7.2]{karol_phd}. The first $(1 + \eps)$-approximation algorithm has been presented by Backurs et al.~\cite{backurs_tree_sparsity}, as an application for the Tree Sparsity problem. Their algorithm runs in time $O(\frac{n}{\eps^2} \log n \log^2 W)$ and has been improved to $O(\frac{n}{\eps} \log (n / \eps) \log W)$~\cite{partition_karol}. Bringmann et al.~\cite{approx_min_plus} finally provided a strongly-polynomial algorithm, i.e., independent of $W$, in time $\widetilde O(n^{3/2} / \sqeps)$. Their result is indeed more general, obtaining a framework that can also be applied to All-Pairs Shortest Pairs (APSP). In this paper, we extend their framework to the context of subset convolutions, with immediate application in $(1 + \eps)$-approximation schemes of several convolution-like \NP-hard problems.
\end{enumerate}

\sparagraph{Subset Convolution.}
\begin{enumerate}[(a)]
    \item Prior to the work of Bj\"orklund et al.~\cite{fsc}, the best algorithm for any subset convolution was the straightforward $O(3^n)$-time evaluation. They provided an $O(2^n n^2)$-time algorithm for the subset convolution in the $(+, \times)$-ring, by relating the FFT to the M\"obius transform on the subset lattice. This result found applications in many problems, among them the Domatic Number and the Chromatic Number (and many others) in $2^n n^{O(1)}$-time~\cite{set_partitioning}.
    \item As far as the $(\min, \max)$-semi-ring is concerned, there is, to our best knowledge, no algorithm that runs in time better than $O(3^n)$. In this paper, we provide such an algorithm, as a by-product.
    \item On the other hand, the convolution in the $(\min, +)$-semi-ring received more attention, as it is implicitly present in many hard optimization problems, a prominent example being the minimum Steiner tree. While so far no $o(3^n)$-time algorithm is known to exist for this convolution, a simple embedding technique can leverage the speedup obtained for the sum-product subset convolution. The caveat is that the input values are required to be bounded by a constant $M$. In this case, the convolution runs in time $\widetilde{O}(2^n M)$~\cite{fsc} (compare to the $\widetilde{O}(n W)$-time algorithm for the corresponding sequence convolution). In particular, no $(1 + \eps)$-approximation algorithm has been known until our work.
\end{enumerate}

\noindent We provide in Appendix~\ref{appendix:convolutions} the definitions of the convolutions we will be discussing.

\subsection{Organization}

We organize the paper as follows:
We begin with the exact algorithm for the \emph{min-max} subset convolution (Sec.~\ref{sec:exact_min_max_subset_conv}).
Subsequently, in Sec.~\ref{sec:approx_min_sum_subset_conv}, we introduce our weakly-polynomial algorithm and then present a simple strongly-polynomial approximation algorithm for the min-sum subset convolution. This serves the equivalence between the exact \emph{min-max} subset convolution and the approximate \emph{min-sum} subset convolution, using the simple algorithm for the former.
Then, we outline an improved strongly-polynomial approximation scheme for the \emph{min-sum} subset convolution, adapting that of Bringmann et al.~\cite{approx_min_plus} to the subset setting.
Based on the improved variant, we introduce in Sec.~\ref{sec:applications} the first $(1+\eps)$-approximation scheme for the minimum cost $k$-coloring problem (we outline many other applications in Appendix~\ref{appendix:further_apps}).
We conclude in Sec.~\ref{sec:conclusion} with several cases that have been considered for the min-plus sequence convolution, but are completely missing in the subset setting. 

\section{Preliminaries}

\sparagraph{Notation.} We use the $\widetilde O$-notation to suppress poly-logarithmic factors in the input size and in $\eps$, but never in $M$ (or in $W$ in the sequence setting). We denote by $[n]$ the set $\{1, \ldots, n\}$ for $n$ a natural number. To be consistent with the notation used in previous work in \emph{both} research fields, we refer to the $(\min, +)$-semi-ring as \emph{min-sum} for subset convolutions and as \emph{min-plus} for sequence convolutions. The same distinction is done between the terms $M$ and $W$, corresponding to the largest (finite) input value in the subset and sequence setting, respectively. We refer to the $(+, \times)$-ring as the sum-product ring, and to the $(\min, \max)$-semi-ring as the min-max semi-ring. We will later need Iverson's bracket notation: Given a property $P$, $[P]$ takes the value $1$ whenever $P$ is true and $0$ otherwise.

\sparagraph{Machine Model and Input Format.} We follow the same setup as assumed by Bringmann et al.~\cite{approx_min_plus}, namely: The input numbers in \emph{approximate} problems are represented in floating-point, whereas those in \emph{exact} problems are integers in the usual bit representation. This particular setup is needed when extending the equivalence between exact min-max and approximate min-plus to the subset setting. To economically use the space for the actual applications, we postpone the details on the setup to our Appendix~\ref{appendix:input_format}.

\section{Exact Min-Max Subset Convolution}\label{sec:exact_min_max_subset_conv}

Both our strongly-polynomial algorithms for evaluating the $(1 + \eps)$-approximate \emph{min-sum} subset convolution heavily rely on the \emph{min-max} subset convolution, defined for all $S \subseteq [n]$ as
\begin{equation}
h(S) = (f \ovee g)(S) = \min_{T \subseteq S} \max \{f(T), g(S \setminus T)\}.
\label{eq:min_max_sc}
\end{equation}
In this section, we provide an $\widetilde O(2^{\frac{3n}{2}})$-time algorithm for Eq.~\eqref{eq:min_max_sc}, inspired by Kosaraju's $\widetilde O(n \sqrt{n})$-time algorithm for min-max \emph{sequence} convolution~\cite{kosaraju}.
We note that similar techniques to Kosaraju's have been considered for the min-max matrix product by Shapira et al.~\cite{shapira_apbp}, Williams et al.~\cite{williams_apbp}, and Duan and Pettie~\cite{duan_pettie_apbp}, as applications to the all-pairs bottleneck pairs problem.
\MinMaxSubsetConvolution*
\begin{proof}
We closely follow Kosaraju's algorithm for the min-max \emph{sequence} convolution~\cite{kosaraju}: First, we collect all values of $f$ and $g$ in a list $\mathcal{L} \vcentcolon= \{(f(S), S), (g(S), S) \mid S \subseteq [n]\}$ and sort it by its first argument (ties are broken arbitrarily). We then divide the sorted list into $O(\sqrt{2^n})$ chunks. Let $\mathcal{C}_i$ be the current chunk, and $\mathcal{C}_i^{1}$ and $\mathcal{C}_i^{2}$ be its projections onto the first and second arguments, respectively. We apply fast (boolean) subset convolution on $[f \leq \max \mathcal{C}_i^{1}]$ and $[g \leq \max \mathcal{C}_i^{1}]$ and obtain $\hat h = [f \leq \max \mathcal{C}_i^{1}] \ast [g \leq \max \mathcal{C}_i^{1}]$. In particular, $\hat h \equiv [h \leq \max \mathcal{C}_i^{1}]$, and thus a non-zero $\hat h(S)$ tells us whether the actual $h(S)$ is bounded above by $\max \mathcal{C}_i^{1}$. At this point, we only need to figure out what the \emph{actual} value of $h(S)$ is.

To this end, we consider only those sets $S$ for which $\mathcal{C}_i$ is the \emph{first} chunk that made $\hat h(S)$ positive. Let us call this set $\mathcal{S}_i$. We thus consider each $S \in \mathcal{S}_i$ separately and compute its exact $h(S)$ value by iterating over the $O(\sqrt{2^n})$ sets $U \in \mathcal{C}_i^{2}$. Formally, we set
\begin{equation}
h(S) = \min_{\subalign{U &\in \mathcal{C}_i^{2} \\ U &\subseteq S}} \max\:\{f(U), g(S \setminus U)\}.
\label{eq:local_opt}
\end{equation}
The correctness follows from construction. Let us analyze the running time. Sorting the list takes $\widetilde O(2^n)$-time. Then, it takes $O(|\mathcal{S}_i|\sqrt{2^n})$-time to evaluate Eq.~\eqref{eq:local_opt} for each chunk $\mathcal{C}_i$, since $|\mathcal{C}_i| = O(\sqrt{2^n})$ by construction. Since we run $O(\sqrt{2^n})$ convolutions in the sum-product ring and Eq.~\eqref{eq:local_opt} is run only once per $S \subseteq [n]$, the total running time reads $O(2^n n^2 \sqrt{2^n} + 2^n \sqrt{2^n}) = \widetilde O(2^\frac{3n}{2})$.
\end{proof}

Notably, the running time given in Thm. 3 is faster than $O(3^n)$. This result will serve as the basis for our \emph{strongly-polynomial} approximation algorithms for \emph{min-sum} subset convolution.

\section{Approximate Min-Sum Subset Convolution}\label{sec:approx_min_sum_subset_conv}

We now turn to our original goal: \emph{approximating the min-sum subset convolution}. We present two ways to do this, either via a weakly-polynomial time algorithm or a strongly-polynomial time algorithm. Let us start with the former.

\subsection{Weakly-Polynomial Algorithm}

\begin{restatable}[]{algorithm}{WeaklyPolyAlgorithm}
    \caption{$\textsc{ApxMinSumSubsetConv}(f,g,\eps)$ [Thm.~\ref{thm:weakly_approx_min_sum_subset_convolution}] -- Running time: $\widetilde O(2^n \log M / \eps)$}
	\label{algo:weakly_polynomial_conv}
\begin{algorithmic}[1]
    \Procedure{Scale}{$f, q, \eps$}
    	\State \Return $S \mapsto
            \begin{cases}\big\lceil \frac{2f(S)}{\eps q}  \big\rceil, & \text{if} \; \big\lceil \frac{2f(S)}{\eps q} \big\rceil \leq \big\lceil \frac{4}{\eps} \big\rceil,\\
    		\infty, & \text{otherwise.}\\
    	\end{cases}$
    \EndProcedure
    \State
    \State Set $\widetilde{h}(S) \gets \infty$ for all $S \subseteq [n]$
    \For {$q = 2^{\ceil{\log{2M}}}, \ldots, 4, 2, 1$}
      \State $f_q, g_q \gets \textsc{Scale}(f,q,\eps), \textsc{Scale}(g,q,\eps)$
      \State $h_q \gets \textsc{ExactMinSumSubsetConv}(f_q, g_q)$ 
      \State $\widetilde{h}(S) \gets \min \{\widetilde{h}(S), h_q(S) \cdot \frac{\eps q}{2} \}$ for all $S \subseteq [n]$
    \EndFor
    \State \Return $\widetilde{h}$
\end{algorithmic}
\end{restatable}
Our weakly-polynomial algorithm is based on the scaling technique, which has been successfully used in several graph problems~\cite{goldberg_graph, gabow_graph, orlin_graph, tarjan_graph, matching_graph}.
In the context of sequence convolution, it has been used by Backurs et al.~\cite{backurs_tree_sparsity} to provide the first $(1+\eps)$-approximation scheme for the min-plus sequence convolution, running in time $O(\frac{n}{\eps^2} \log n \log^2 W)$.
This has been later improved by Mucha et al.~\cite{partition_karol} to $O(\frac{n}{\eps} \log (n / \eps) \log W)$-time.

We can use a method similar to that in Ref.~\cite{partition_karol} and solve the $(1+\eps)$-approximate min-sum subset convolution in weakly-polynomial time. The key insight is that we can replace the ``heart'' of their algorithm, namely using the exact $\widetilde O(n W)$-time min-plus sequence convolution for bounded input, with the subset counterpart running in time $\widetilde O(2^n M)$, due to Bj\"orklund et al.~\cite{fsc}. To our best knowledge, this observation had not appeared in the literature before. In particular, the applications themselves had no $(1 + \eps)$-approximation schemes prior to our work. We outline our algorithm in Alg.~\ref{algo:weakly_polynomial_conv} and prove Thm.~\ref{thm:weakly_approx_min_sum_subset_convolution} in Appendix~\ref{appendix:weakly_poly}.

\WeaklyPolyMinSumSubsetConvolution*

\sparagraph{Striving for Strongly-Polynomial Time.} While the weakly-polynomial algorithm allows us to reduce the linear dependence on $M$ in the running time of the algorithm of Bj\"orklund et al.~\cite{fsc} to a logarithmic one, the dependence on $M$ still remains.
The caveat is that this dependence will carry over in the actual applications. \emph{Is this the best we can hope for?} In the following, we will show that we can \emph{completely} discard this dependence.

In a recent breakthrough, Bringmann et al.~\cite{approx_min_plus} introduced the first strongly-polynomial algorithm for evaluating the approximate min-plus \emph{sequence} convolution: Given two sequences $a$ and $b$, their min-plus convolution\footnote{In this work, since we are dealing with two types of convolution, on sequences and subsets, respectively, we will consistently add the specification ``sequence'' to avoid any confusion.} is a sequence $c$, where $c_k = \min_{i \leq k}\:(a_i + b_{k - i})$.
As in our case, its approximate variant asks for a sequence $\widetilde c$ such that $c_k \leq \widetilde c_k \leq (1 + \eps)c_k$. Notoriously, previous work on this problem had employed the scaling trick~\cite{zwick_apsp}, which relies on the largest input value $W$ and introduces an additional $\log W$ into the running time.\footnote{As noted above, the term $W$ in the literature on sequence convolutions corresponds to $M$ in subset convolutions.} The approximate convolution finds a natural application in the Tree Sparsity problem~\cite{tree_sparsity_initial}, where its exact algorithm is prohibitively expensive.

The aforementioned authors asked a fundamental question: \emph{Is it possible to completely avoid the scaling trick?}
They indeed answered this question affirmatively, by designing the first \emph{strongly-polynomial} $(1 + \eps)$-approximation schemes for the all-pairs shortest pairs (APSP) problem and many others, including the min-plus sequence convolution. Their cornerstone result is the Sum-to-Max-Covering lemma:

\begin{lemma}[Sum-to-Max Covering~\cite{approx_min_plus}]
\label{lemma:sum_to_max_covering}
Given vectors $A, B \in \mathbb{R}^d_{+}$ and a parameter $\eps > 0$, there are vectors $A^{(1)}, \ldots, A^{(s)}, B^{(1)}, \ldots, B^{(s)} \in \mathbb{R}^d_{+}$ with $s = O(\frac{1}{\eps}\log\frac{1}{\eps} + \log d \log \frac{1}{\eps})$ such that for all $i, j \in [d]$:
\[
    A[i] + B[j] \leq \displaystyle\min_{\ell\in[s]}\max\{A^{(\ell)}[i], B^{(\ell)}[j]\} \leq (1 + \eps)(A[i] + B[j]).
\]
The vectors $A^{(1)}, \ldots, A^{(s)}, B^{(1)}, \ldots, B^{(s)}$ can be computed in time $O(\frac{d}{\eps} \log\frac{1}{\eps} + d \log d\log \frac{1}{\eps})$.
\end{lemma}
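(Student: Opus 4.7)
The plan is to exploit the elementary identity
\[
a + b \;=\; \min_{t > 0}\,\max\bigl\{(1 + 1/t)\,a,\;(1 + t)\,b\bigr\},
\]
whose minimum is attained at $t^{\star} = a/b$, where both arguments of the $\max$ equal $a + b$. If the minimum could be taken over a continuum of $t$, this would already be an exact identity; my task is therefore to replace it by a minimum over a sparse, \emph{instance-independent} grid $\{t_k\}$ while preserving a $(1 + \eps)$-factor.

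Concretely, for the geometric grid $t_k = (1 + \eps)^k$, I would set
\[
A^{(k)}[i] \;=\; (1 + 1/t_k)\,A[i], \qquad B^{(k)}[j] \;=\; (1 + t_k)\,B[j].
\]
The lower bound $\max\{A^{(k)}[i], B^{(k)}[j]\} \geq A[i] + B[j]$ holds for \emph{every} single $k$ by a one-line contradiction: if both entries were strictly smaller than $A[i] + B[j]$, then $t_k > A[i]/B[j]$ and $t_k < A[i]/B[j]$ would have to hold simultaneously. For the upper bound, I would pick $k$ such that $t_k$ lies within a factor $(1 + \eps)^{1/2}$ of the true ratio $r_{ij} = A[i]/B[j]$; a direct calculation using $(1 + \eps)^{1/2} - 1 \leq \eps/2$ then yields $\max\{A^{(k)}[i], B^{(k)}[j]\} \leq (1 + \eps)(A[i] + B[j])$.

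The hard part will be bounding the range of $t_k$, since a priori $r_{ij}$ can be anywhere in $(0, \infty)$. My plan is to normalize so that all entries lie in $[0, 1]$, and then split into the \emph{balanced} regime $r_{ij} \in [\eps, 1/\eps]$ -- handled by the $O(\log(1/\eps)/\eps)$ grid points above, which gives the first additive term in $s$ -- and the \emph{extreme} regime $r_{ij} \notin [\eps, 1/\eps]$, in which the sum is already $(1 + \eps)$-approximated by the larger side alone. Extreme pairs can be covered by vectors of the simple form $A^{(\cdot)}[i] \in \{0, \infty\}$ and $B^{(\cdot)}[j] \in \{(1 + \eps) B[j], \infty\}$, but one must choose carefully which entries go to which value so that the \emph{universal} lower bound is preserved for all index pairs, including those outside the current pair's scope. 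The cleanest route I see is to bucket indices into $O(\log d)$ geometric magnitude classes (after normalization the nonzero entries span at most a $\poly(d)$ range) and, within each bucket, to place $O(\log(1/\eps))$ sub-thresholds to fill the narrow sliver of ratios not covered by a single ``pure'' dominant pair, contributing the additive $O(\log d \log(1/\eps))$ term in $s$.

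Each vector pair is constructible in $O(d)$ time after a one-time sort and bucketing in $O(d \log d)$, matching the running-time bound claimed in the lemma. Conceptually the hardest step is the careful case analysis for the extreme regime; once the two regimes are cleanly separated, the balanced one reduces to a direct geometric discretization of the key identity above.
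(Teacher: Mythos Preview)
First, note that the paper does not prove this lemma: it is quoted from Bringmann et al.~\cite{approx_min_plus} and used as a black box, so there is no in-paper proof to compare against.

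Your balanced-regime argument is correct and is indeed the mechanism behind the $O(\tfrac{1}{\eps}\log\tfrac{1}{\eps})$ term: the identity $a+b=\min_{t>0}\max\{(1+1/t)a,(1+t)b\}$, discretized on a geometric grid over $t\in[\eps,1/\eps]$, gives the required covering for all pairs with $A[i]/B[j]\in[\eps,1/\eps]$, and the lower bound holds for every grid point by exactly the contradiction you describe.

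The extreme regime, however, has a real gap. You write that ``after normalization the nonzero entries span at most a $\poly(d)$ range,'' and this is the sole justification for getting away with $O(\log d)$ geometric magnitude buckets. But the lemma is stated for arbitrary $A,B\in\mathbb{R}^d_{+}$, and nothing bounds $\max_i A[i]/\min_{i:A[i]>0} A[i]$; it can be $2^{2^d}$ or anything else. Value-based geometric bucketing therefore yields $\Theta(\log M)$ layers rather than $O(\log d)$, which is precisely the $M$-dependence the lemma is designed to eliminate, so the argument as written is circular. The construction in~\cite{approx_min_plus} sidesteps this by sorting the $2d$ entries and building the distant-pair covering over a dyadic hierarchy of \emph{ranks}: the depth is then $O(\log d)$ regardless of the numerical spread, and this is also where the $d\log d$ term in the running time originates. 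Your $\{0,\infty\}$-thresholding idea is morally right, but the thresholds have to be placed combinatorially (by position in the sorted order), not by raw magnitude, for the stated bound on $s$ to go through.
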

The motivation behind their lemma is that min-max sequence convolution can be solved in $\widetilde O(n \sqrt{n})$-time (using Kosaraju's algorithm~\cite{kosaraju}), in contrast to the best-known algorithm for min-plus convolution which runs in $O(n^2)$-time.\footnote{Indeed, it is an open problem whether this is also the best that can be achieved~\cite{cygan_conjecture_1, kuennnemann_conjecture_2}.} In this light, one can first compute the auxiliary vectors $A^{(1)}, \ldots, A^{(s)}, B^{(1)}, \ldots, B^{(s)}$ via Lemma~\ref{lemma:sum_to_max_covering} and then solve the $(1 + \eps)$-approximate min-plus sequence convolution by performing $s$ min-max sequence convolutions.

\subsection{Simple Strongly-Polynomial Approximation Algorithm}\label{subsec:inter}

The reader can already see the applicability to our setting: We can apply the Sum-to-Max Covering lemma in min-sum \emph{subset} convolution. Indeed, this is the cornerstone neglected by Bringmann et al.~\cite{approx_min_plus}. This key insight is the building block behind our new results in Sec.~\ref{sec:applications}.  

\begin{algorithm}
    \caption{$\textsc{ApxMinSumSubsetConv}(f,g,\eps)$ [Lemma~\ref{lemma:simple_approx_min_sum_subset_conv}] -- Running time: $\widetilde O(2^\frac{3n}{2} / \eps)$}
	\label{alg:approx-minconv-simple}
\begin{algorithmic}[1]
    \State $\{f^{(1)},\ldots,f^{(s)},g^{(1)},\ldots,g^{(s)}\} \gets \textsc{SumToMaxCovering}(f, g, \eps)$ (Lemma~\ref{lemma:sum_to_max_covering})
    \State $h^{(\ell)} \gets \textsc{MinMaxSubsetConv}(f^{(\ell)}, g^{(\ell)})$ for each $\ell \in [s]$
    \State $\widetilde h(S) \gets \displaystyle\min_{\ell \in [s]}\:\{ h^{(\ell)}[\texttt{idx}(S)] \}$ for each $S \subseteq [n]$
    \State \Return $\widetilde h$
\label{algo:approx_min_sum_subset_conv}
\end{algorithmic}
\end{algorithm}

\begin{restatable}[]{lemma}{SimpleStrongly}
\label{lemma:simple_approx_min_sum_subset_conv}
$(1+\eps)$-Approximate min-sum subset convolution can be solved in time $\widetilde O(2^\frac{3n}{2}/\eps)$.
\end{restatable}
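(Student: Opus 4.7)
The plan is to follow Algorithm~\ref{alg:approx-minconv-simple} and reduce the problem to $s$ instances of exact min-max subset convolution. Concretely, I would view $f$ and $g$ as vectors $A, B \in \mathbb{R}^d_+$ indexed by subsets of $[n]$, where $d = 2^n$, and then apply Lemma~\ref{lemma:sum_to_max_covering} with parameter $\eps$ to obtain families $\{f^{(\ell)}\}_{\ell \in [s]}$ and $\{g^{(\ell)}\}_{\ell \in [s]}$. For each $\ell$, I would compute the min-max subset convolution $h^{(\ell)} = f^{(\ell)} \ovee g^{(\ell)}$ via Thm.~\ref{thm:min_max_subset_convolution}, and finally set $\widetilde h(S) = \min_{\ell \in [s]} h^{(\ell)}(S)$.

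For correctness, the key observation is that the Sum-to-Max Covering guarantees, for every pair $T, S\setminus T$ with $T \subseteq S$,
\[
f(T) + g(S \setminus T) \;\leq\; \min_{\ell \in [s]} \max\{f^{(\ell)}(T), g^{(\ell)}(S \setminus T)\} \;\leq\; (1+\eps)\bigl(f(T) + g(S \setminus T)\bigr).
\]
Taking the minimum over $T \subseteq S$ on all three sides and swapping the two minima on the middle term yields
\[
(f \star g)(S) \;\leq\; \min_{\ell \in [s]} \min_{T \subseteq S} \max\{f^{(\ell)}(T), g^{(\ell)}(S\setminus T)\} \;=\; \min_{\ell \in [s]} (f^{(\ell)} \ovee g^{(\ell)})(S) \;\leq\; (1+\eps)(f \star g)(S),
\]
which is exactly $(f \star g)(S) \leq \widetilde h(S) \leq (1+\eps)(f \star g)(S)$, as required.

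For the running time, instantiating Lemma~\ref{lemma:sum_to_max_covering} with $d = 2^n$ gives $s = O\bigl(\tfrac{1}{\eps}\log\tfrac{1}{\eps} + n\log\tfrac{1}{\eps}\bigr) = \widetilde O(1/\eps)$, with the preprocessing cost $\widetilde O(2^n/\eps)$ subsumed by the rest. Each of the $s$ calls to the min-max subset convolution runs in $\widetilde O(2^{3n/2})$ time by Thm.~\ref{thm:min_max_subset_convolution}, and the final pointwise minimization costs $O(s \cdot 2^n)$. The total running time is therefore $\widetilde O(2^{3n/2}/\eps)$.

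I expect no substantial obstacle here: the Sum-to-Max Covering lemma is stated in a purely index-agnostic way, so re-indexing by subsets is immediate; the only subtlety is the min-swap step, which is routine. The whole point of the lemma is that this reduction is lossless in the quality of approximation, and the strongly-polynomial character is inherited directly from Thm.~\ref{thm:min_max_subset_convolution}, which makes no reference to $M$.
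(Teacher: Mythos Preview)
Your proposal is correct and follows essentially the same approach as the paper's proof: view $f,g$ as vectors in $\mathbb{R}^{2^n}_+$, apply Sum-to-Max Covering, run $s$ exact min-max subset convolutions, and take the entry-wise minimum. The only detail the paper adds that you omit is a machine-model remark---before invoking Thm.~\ref{thm:min_max_subset_convolution} one replaces the floating-point entries of $f^{(\ell)},g^{(\ell)}$ by their integer ranks, since the exact min-max routine expects standard bit representation---but this does not affect the argument or the running time.
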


We prove Lemma~\ref{lemma:simple_approx_min_sum_subset_conv} in Appendix~\ref{appendix:simple_strongly}. Note that this \emph{is still an intermediate result}. In the following, we will use it to prove the equivalence between exact min-max subset convolution and approximate min-sum subset convolution. This represents the extension of the same result at the level of sequence convolutions. We will prove Thm.~\ref{thm:equivalence} in Appendix~\ref{appendix:equivalence}.
\Equivalence*

\subsection{Improved Strongly-Polynomial Approximation Algorithm}
We used the $\widetilde O(2^\frac{3n}{2} / \eps)$-time algorithm for the approximate min-sum subset convolution (Lemma~\ref{lemma:simple_approx_min_sum_subset_conv}) as a basis for the equivalence between \emph{approximate} min-sum subset convolution and \emph{exact} min-max subset convolution. In the following, we design an algorithm with an improved running time of $\widetilde O(2^\frac{3n}{2} / \sqeps)$. Our key insight is that the new algorithm can be analyzed using a toolbox similar to that of Bringmann et al.~\cite{approx_min_plus}. This will be the running time we use for the applications in Sec.~\ref{sec:applications} and Appendix~\ref{appendix:further_apps}.

We split the main result in the corresponding subcases, namely \texttt{(i)} \emph{distant} summands and \texttt{(ii)} \emph{close} summands. The main theorem, Thm.~\ref{thm:strongly_approx_min_sum_subset_convolution}, will follow from these.

\begin{restatable}[]{lemma}{DistantSummandsLemma}
    \label{lemma:distant_summands}
    Given set functions $f, g$ on the subset lattice of order $n$ and a parameter $\eps > 0$, let $h = f \star g$ be their min-sum subset convolution. We can compute in time $O(2^\frac{3n}{2} \polylog(\frac{2^n}{\eps}))$ a set function $\widetilde h$ such that for any $S \subseteq [n]$ we have
    \begin{enumerate}[(i)]
        \item $h(S) \leq \widetilde h(S)$, and
        \item if there is $T \subseteq S$ with $h(S) = f(T) + g(S \setminus T)$ and $\frac{f(T)}{g(S \setminus T)} \not\in [\frac{\eps}{4}, \frac{4}{\eps}]$, then $\widetilde h(S) \leq (1+\eps)h(S)$.
    \end{enumerate}
\end{restatable}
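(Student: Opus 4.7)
My plan is to adapt Bringmann et al.'s~\cite{approx_min_plus} strongly-polynomial distant-summands procedure for min-plus sequence convolution to the subset lattice, using the min-max subset convolution of Thm.~\ref{thm:min_max_subset_convolution} as the workhorse. The guiding observation is that in the distant regime $f(T)/g(S\setminus T)\notin[\eps/4,4/\eps]$, the dominant summand is a $(1+\eps/4)$-approximation of the sum, so $(1+\eps)$-approximate min-sums arising from distant witnesses can be recovered via a small number of value-restricted min-max subset convolutions. Note that the full Sum-to-Max Covering of Lemma~\ref{lemma:sum_to_max_covering} is too expensive here---its $\Theta(1/\eps)$-term would violate the $\polylog(2^n/\eps)$ budget---so I would use only its ``distant-only'' fragment, of size $s = O(n\log(1/\eps))$.

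Concretely, I would construct $s$ pairs $(f^{(\ell)}, g^{(\ell)})_{\ell=1}^{s}$ of subset-lattice set functions by lifting the distant-only construction of~\cite{approx_min_plus}. Each $f^{(\ell)}$ (resp.\ $g^{(\ell)}$) restricts $f$ (resp.\ $g$) to a value band by sending out-of-band entries to $+\infty$, so that whenever the distant hypothesis holds at some optimum $T^*\subseteq S$, at least one pair $\ell^*$ simultaneously has $f(T^*)$ inside its $f$-band and $g(S\setminus T^*)$ within its $g$-cap. Crucially, I would parameterize the bands by the sorted ranks of the $2^{n+1}$ input values of $f,g$---analogously to how Thm.~\ref{thm:min_max_subset_convolution} lifts Kosaraju's algorithm to the subset lattice---replacing any $\log M$ factor by $\log(2^{n+1}) = O(n)$, and combining this with $O(\log(1/\eps))$ multiplicative precision refinements per rank-scale. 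For each pair I invoke Thm.~\ref{thm:min_max_subset_convolution} on $(f^{(\ell)}, g^{(\ell)})$ in $\widetilde O(2^{3n/2})$ time, recording, together with the min-max value at each $S$, a witness set $T^{(\ell)}(S)\subseteq S$ attaining it; the output is
\[
\widetilde h(S) \;=\; \min_{\ell\in[s]}\bigl(f(T^{(\ell)}(S)) + g(S\setminus T^{(\ell)}(S))\bigr),
\]
with the convention $\widetilde h(S) = +\infty$ when no finite witness arises, and with a symmetric copy of the construction handling the case where $g$ dominates.

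Property~(i) is immediate because each candidate value is a genuine evaluation of $f(T)+g(S\setminus T)$ for some $T\subseteq S$, hence an upper bound on $h(S)$. For property~(ii), the distant hypothesis together with the covering's design force the witness $T'$ at the matching pair $\ell^*$ to satisfy $f(T')\leq\tau_{\ell^*}$ and $g(S\setminus T')\leq(\eps/4)\tau_{\ell^*}$, where $\tau_{\ell^*}$ is the band's upper endpoint sitting within a $(1+\eps/4)$-factor of $f(T^*)$; hence $f(T')+g(S\setminus T')\leq(1+\eps/4)\tau_{\ell^*}\leq(1+\eps/4)^2 f(T^*)\leq(1+\eps)h(S)$. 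The total running time is $s\cdot\widetilde O(2^{3n/2}) = O(2^{3n/2}\polylog(2^n/\eps))$. The main obstacle is showing that the distant-only covering construction transfers to the subset-lattice setting without reintroducing a $\log M$ factor: unlike the sequence case, where value bands naturally align to integer indices, on the subset lattice one must work directly with the sorted multiset of $f$- and $g$-values, mirroring the rank-based chunking that underlies Thm.~\ref{thm:min_max_subset_convolution}, and verify that the witness (and not merely the min-max value) returned by the min-max subset convolution composes correctly with the restricted inputs to yield the $(1+\eps)$-guarantee.
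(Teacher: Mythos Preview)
Your approach is correct in its high-level structure and would yield a valid proof, but it is considerably more involved than the paper's and your stated ``main obstacle'' is in fact a non-obstacle.

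The paper's proof is a direct black-box application: view $f,g$ as vectors in $\mathbb{R}^{2^n}_+$, invoke the Distant Covering lemma~\cite[Cor.~5.10]{approx_min_plus} with $\eps' = \eps/4$ to obtain $f^{(1)},\ldots,f^{(s)},g^{(1)},\ldots,g^{(s)}$ with $s=O(\log(2^n)\log\tfrac{1}{\eps})$, compute each $h^{(\ell)} = f^{(\ell)} \ovee g^{(\ell)}$ via Thm.~\ref{thm:min_max_subset_convolution}, and output $\widetilde h(S) = \tfrac{1}{1-\eps/2}\min_{\ell} h^{(\ell)}(S)$. Property~(i) comes from property~(i) of the Distant Covering lemma after rescaling; property~(ii) from its property~(ii). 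No witness tracking is needed, and no reconstruction of the covering is required.

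Your route differs in two places. First, you rebuild the distant covering by hand via rank-parameterized value bands, whereas the paper simply cites~\cite[Cor.~5.10]{approx_min_plus} verbatim. The lemma is stated purely in terms of the \emph{values} $A[i],B[j]$ and is oblivious to how the coordinates are indexed; it therefore applies to set functions with $d=2^n$ without any adaptation, and already gives $s=O(\log d\log\tfrac{1}{\eps})=O(n\log\tfrac{1}{\eps})$ with no $\log M$ dependence. Your concern that ``on the subset lattice one must work directly with the sorted multiset'' is thus misplaced. Second, you track witnesses from the min-max convolution and output genuine $f+g$ evaluations, which makes property~(i) trivial but forces you to argue that the returned witness at the matching $\ell^*$ lies in the right bands---a correct but more delicate argument than the paper's one-line rescaling. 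Both approaches achieve the same running time; the paper's buys simplicity by leaning fully on the existing framework, while yours gains the minor conceptual advantage that $\widetilde h(S)$ is always a realized value of $f(T)+g(S\setminus T)$.
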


In Appendix~\ref{appendix:distant_covering_lemma}, we show that Alg.~\ref{algo:distant_conv} returns such an $\widetilde h$. Briefly, we view the set functions $f, g$ as vectors in $\mathbb{R}^{2^n}_{+}$.
Hence, we can use Distant Covering from the framework~\cite[Cor.~5.10]{approx_min_plus} on $f, g$ with $\eps' \vcentcolon= \frac{\eps}{4}$ and obtain $f^{(1)}, \ldots, f^{(s)}, g^{(1)}, \ldots, g^{(s)}$ with $s = O\left(\polylog(\frac{2^n}{\eps})\right)$.\footnote{For completeness, we provide \cite[Cor.~5.10]{approx_min_plus} as Lemma~\ref{lemma:distant_covering_lemma} in Appendix~\ref{appendix:distant_covering_lemma}.}
For each $l \in [s]$, we compute $h^{(l)} = f^{(l)} \ovee g^{(l)}$ and scale the entry-wise minimum by $\frac{1}{1 - 2\eps'}$.

The case of \emph{close} summands is covered by the following result:

\begin{restatable}[]{lemma}{CloseSummandsLemma}
    \label{lemma:close_summands}
    Given set functions $f, g$ on the subset lattice of order $n$ and a parameter $\eps > 0$, let $h = f \star g$ be their min-sum subset convolution. We can compute in time $\widetilde O(2^\frac{3n}{2} / \sqeps)$ a set function $\widetilde h$ such that for any $S \subseteq [n]$ we have
    \begin{enumerate}[(i)]
        \item $h(S) \leq \widetilde h(S)$, and
        \item if $T \subseteq S$ with $h(S) = f(T) + g(S \setminus T)$ and $\frac{f(T)}{g(S \setminus T)} \in [\frac{\eps}{4}, \frac{4}{\eps}]$, then $\widetilde h(S) \leq (1+\eps)h(S)$.
    \end{enumerate}
\end{restatable}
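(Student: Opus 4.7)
The plan is to mirror the architecture of Lemma~\ref{lemma:distant_summands}, swapping in the \emph{Close} Covering primitive of Bringmann et al.~\cite{approx_min_plus} in place of Distant Covering, and again invoking our exact min-max subset convolution (Thm.~\ref{thm:min_max_subset_convolution}) as a black box. Viewing $f$ and $g$ as vectors in $\mathbb{R}^{2^n}_+$ and instantiating Close Covering with $\eps' \vcentcolon= \Theta(\eps)$ produces $s = \widetilde O(1/\sqeps)$ pairs $(f^{(\ell)}, g^{(\ell)})$ with the guarantee that for every index pair $(T, S \setminus T)$ with $f(T)/g(S \setminus T) \in [\eps/4, 4/\eps]$,
\[
    f(T) + g(S \setminus T) \;\leq\; \min_{\ell \in [s]} \max\{f^{(\ell)}(T),\, g^{(\ell)}(S \setminus T)\} \;\leq\; (1 + \eps)\bigl(f(T) + g(S \setminus T)\bigr),
\]
while on all index pairs the inner minimum upper bounds the true sum.

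I would then compute $h^{(\ell)} \vcentcolon= f^{(\ell)} \ovee g^{(\ell)}$ for each $\ell \in [s]$ using Thm.~\ref{thm:min_max_subset_convolution} in $\widetilde O(2^{3n/2})$ time per call, and set $\widetilde h(S) \vcentcolon= \min_{\ell \in [s]} h^{(\ell)}(S)$. Property~(i) follows from the universal upper bound in the covering, since for every $\ell$ and $S$, $h^{(\ell)}(S) \geq \min_{T \subseteq S}\bigl(f(T) + g(S \setminus T)\bigr) = h(S)$. For (ii), if the optimal splitter $T^\star$ falls in the close regime, then applying the covering inequality to the pair $(T^\star, S \setminus T^\star)$ yields some $\ell^\star$ with $h^{(\ell^\star)}(S) \leq \max\{f^{(\ell^\star)}(T^\star), g^{(\ell^\star)}(S \setminus T^\star)\} \leq (1 + \eps)\, h(S)$, and taking the min over $\ell$ only tightens this. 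The total running time is $s$ min-max subset convolutions plus the covering's own $\widetilde O(2^n / \sqeps)$ build cost, which collapses to $\widetilde O(2^{3n/2} / \sqeps)$.

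The main obstacle is verifying that the Close Covering lemma of~\cite{approx_min_plus}, originally phrased and instantiated for sequence convolution with $d = n$, continues to hold with $d = 2^n$ without degrading the claimed $s = \widetilde O(1/\sqeps)$ count beyond what the $\widetilde O$-notation already absorbs; this is essentially syntactic, as the covering is a statement about two nonnegative real vectors and is oblivious to any convolution structure. A secondary sanity check is that the value-bucketing inside Close Covering (typically rounding to powers of $(1 + \sqeps)$ inside the window $[\eps/4, 4/\eps]$) is compatible with feeding the resulting vectors into min-max subset convolution; since Thm.~\ref{thm:min_max_subset_convolution} is itself agnostic to the numerical structure of its inputs, no extra work is needed here.
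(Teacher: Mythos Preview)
Your proposal rests on a primitive that does not exist. There is no ``Close Covering'' lemma in~\cite{approx_min_plus} with $s = \widetilde O(1/\sqeps)$; the close regime of Sum-to-Max Covering (Lemma~\ref{lemma:sum_to_max_covering}) contributes $s = O(\reveps\log\reveps)$ vectors, not $\widetilde O(1/\sqeps)$. Your parenthetical hint---rounding to powers of $(1+\sqeps)$ inside the window $[\eps/4,4/\eps]$---does produce $\widetilde O(1/\sqeps)$ buckets, but then each bucket only guarantees $\max\{f^{(\ell)}(T),g^{(\ell)}(S\setminus T)\}$ within a $(1+\sqeps)$ factor of the target, so the resulting $\widetilde h$ is a $(1+\sqeps)$-approximation, not a $(1+\eps)$-approximation. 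Plugging your construction into the min-max black box therefore yields either the wrong approximation guarantee or the wrong running time $\widetilde O(2^{3n/2}/\eps)$, which is exactly Lemma~\ref{lemma:simple_approx_min_sum_subset_conv} and not the improvement claimed here.

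The paper obtains the $1/\sqeps$ dependence by an entirely different mechanism that does \emph{not} go through min-max subset convolution at all. It runs a geometric scaling loop over $q$: in round $q$ it truncates $f,g$ to the window $[\eps q/16,\,q]$, rescales so the surviving values are integers bounded by $\lceil 4/\eps\rceil$, and then computes an \emph{exact min-sum} subset convolution on the scaled functions. The $\sqeps$ arises from a threshold argument in the running-time analysis: in each round one takes the cheaper of (a) the brute-force $O(\alpha_q\beta_q)$ convolution over the $\alpha_q,\beta_q$ non-$\infty$ entries, or (b) the bounded-value algorithm of~\cite{fsc} costing $\widetilde O(2^n/\eps)$. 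Since each entry survives in only $O(\log\reveps)$ rounds, $\sum_q\alpha_q,\sum_q\beta_q = \widetilde O(2^n)$, and balancing the two options at threshold $\lambda=(2^n/\eps)^{1/2}$ gives $\widetilde O(2^{3n/2}/\sqeps)$. This trade-off between sparse brute force and dense bounded-value convolution is the genuine source of the $\sqeps$, and it has no analogue in the covering-plus-min-max template you propose.
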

In Appendix~\ref{appendix:close_summands_lemma}, we show that Alg.~\ref{algo:close_conv} returns such an $\widetilde h$. We point out one aspect that a careful reader will notice, namely that the for-loop at line 6 (Alg.~\ref{algo:close_conv}) does indeed use the largest value $M$. At first glance, this seems to contradict our assumption that we are trying to avoid running times that include $M$. The key insight to understanding this, used by Bringmann et al.~\cite{approx_min_plus} in the sequence setting, is the fact that $q$ grows geometrically, and thus the number of entries of $f_q$ and $g_q$ that are not set to $\infty$ by \textsc{Scale} is bounded by $O(2^n \log \reveps)$, as we analyze in the proof of Lemma~\ref{lemma:close_summands}. We finally conclude with the final result:
\StronglyPolyMinSumSubsetConvolution*
\begin{proof}
We run both algorithms for the distant and close summands, respectively, and take the entry-wise minimum (Alg.~\ref{algo:apx_min_sum_subset_conv_final} in Appendix~\ref{appendix:algo_extra} shows this). Correctness and running time follow from those of Lemmas~\ref{lemma:distant_summands} and~\ref{lemma:close_summands}.
\end{proof}
\begin{restatable}[]{corollary}{MaxSumSubsetConvolution}
    $(1 - \eps)$-Approximate max-sum subset convolution can be solved in time $\widetilde O(2^\frac{3n}{2} / \sqeps)$.
\end{restatable}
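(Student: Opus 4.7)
The plan is to mirror, step by step, the entire development used for the $(1+\eps)$-approximate min-sum subset convolution, with the roles of $\min$ and $\max$ interchanged. Concretely, we want to output $\widetilde h$ with $(1-\eps)\,h(S) \le \widetilde h(S) \le h(S)$, where $h(S) = \max_{T\subseteq S}\bigl(f(T) + g(S\setminus T)\bigr)$.

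First, I would establish the max-min analogue of Theorem~\ref{thm:min_max_subset_convolution}: the max-min subset convolution $(f\,\underline{\wedge}\,g)(S) = \max_{T\subseteq S}\min\{f(T), g(S\setminus T)\}$ can be evaluated in $\widetilde O(2^{3n/2})$ time by literally the algorithm of Theorem~\ref{thm:min_max_subset_convolution}, with three cosmetic changes: (a) we process the sorted list of $(f(S),S)$ and $(g(S),S)$ pairs in \emph{decreasing} order of the first coordinate, (b) we apply fast boolean subset convolution to the indicators $[f \ge \min \mathcal C_i^{1}]$ and $[g \ge \min \mathcal C_i^{1}]$, and (c) we replace $\min$ by $\max$ and $\max$ by $\min$ in the local evaluation step analogous to Eq.~\eqref{eq:local_opt}. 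The correctness and the $\widetilde O(2^{3n/2})$ running time are preserved verbatim.

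Next, I would formulate the max-sum counterpart of Lemma~\ref{lemma:sum_to_max_covering}, a Sum-to-Min-Covering: for vectors $A, B \in \mathbb R^d_+$ and $\eps>0$, there exist $A^{(1)},\dots,A^{(s)},B^{(1)},\dots,B^{(s)}$ with $s = \widetilde O(1/\eps)$ such that for all $i,j$,
\[
(1-\eps)\,(A[i] + B[j]) \;\le\; \max_{\ell\in[s]} \min\bigl\{A^{(\ell)}[i],\, B^{(\ell)}[j]\bigr\} \;\le\; A[i] + B[j].
\]
This is obtained by taking the construction of Bringmann et al.~\cite{approx_min_plus} and reversing all inequalities (and swapping the $\min/\max$ roles in the bucketing); a bucket that in the min-sum case captures ``$A[i]+B[j]\le$ threshold'' becomes one that captures ``$A[i]+B[j]\ge$ threshold''. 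With these two ingredients, the analogue of Alg.~\ref{alg:approx-minconv-simple} already yields an $\widetilde O(2^{3n/2}/\eps)$-time $(1-\eps)$-approximation scheme for max-sum, via the max-min subset convolutions produced in step one.

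To push the dependence on $\eps$ down to $1/\sqeps$, I would repeat the distant/close split of Lemmas~\ref{lemma:distant_summands}--\ref{lemma:close_summands}, reversed for max-sum. For \emph{distant} summands, one uses the ``max-sum'' version of Distant Covering from the framework~\cite{approx_min_plus} on $f, g$ viewed as vectors in $\mathbb R^{2^n}_+$, runs max-min subset convolutions on the resulting $\widetilde O(1)$ pairs, takes the entry-wise maximum, and rescales by $(1 - 2\eps')$ instead of $\frac{1}{1-2\eps'}$. For \emph{close} summands, one runs the scaled max-min subset convolutions for $q$ growing geometrically up to $M$, with the analogous \textsc{Scale} procedure that clips small values to $0$ (rather than large ones to $\infty$); the same geometric-growth argument bounds the total work by $\widetilde O(2^{3n/2}/\sqeps)$. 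Taking the entry-wise maximum of the two outputs gives the claimed $\widetilde h$. The main obstacle is purely bookkeeping: verifying that every direction of every approximation inequality flips consistently under the swap $(\min,+)\leftrightarrow(\max,+)$ and $(\min,\max)\leftrightarrow(\max,\min)$, in particular in the rescaling step and in the threshold rounding inside Distant Covering; once this is checked, correctness and the running time follow immediately from the min-sum case.
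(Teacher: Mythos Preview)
Your approach is correct but takes a genuinely different route from the paper. The paper dispatches the corollary in one line: ``maximizing any function $\phi$ is the same as minimizing $-\phi$,'' and handles the positivity requirement of the min-sum framework by shifting $-\phi$ by the largest value. In other words, the paper reduces max-sum to min-sum via $f'(T) \vcentcolon= C - f(T)$, $g'(T) \vcentcolon= C - g(T)$, runs the $(1+\eps)$-approximate min-sum algorithm of Thm.~\ref{thm:strongly_approx_min_sum_subset_convolution}, and un-shifts.

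You instead dualize the entire machinery: you build an exact $\widetilde O(2^{3n/2})$ algorithm for \emph{max-min} subset convolution, state a Sum-to-\emph{Min}-Covering lemma, and redo the distant/close split with all inequalities flipped. This is more work, but it buys something real. The paper's shift-and-negate reduction, read literally, does \emph{not} preserve multiplicative approximation: a $(1+\eps)$-factor on $h'(S) = 2C - h(S)$ yields only the additive guarantee $\widetilde h(S) \ge h(S) - \eps\bigl(2C - h(S)\bigr)$, which is far from $(1-\eps)h(S)$ when $h(S) \ll C$. Your dualization avoids this issue entirely, since every step (Kosaraju-style max-min, the covering lemma, the close-summand scaling) is symmetric under the swap $(\min,\max)\leftrightarrow(\max,\min)$ without any additive translation. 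So while the paper's sketch is shorter, your argument is the one that actually closes the gap.

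One small correction: in your close-summand case, ``clips small values to $0$'' is not quite right. As in Alg.~\ref{algo:close_conv}, values outside the window $[\eps q/16, q]$ (both too small and too large) must be set to $-\infty$ so they never participate in the maximum; setting them to $0$ would let them pair with a large partner and pollute the output. Also, the bounded-range exact max-sum subset convolution you need inside the loop is indeed available in $\widetilde O(2^n/\eps)$ time, by the same ring embedding Bj\"orklund et al.\ use for min-sum (read off the highest-degree nonzero coefficient instead of the lowest). With these two clarifications, your sketch goes through.
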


\section{Applications}\label{sec:applications}

Armed with the novel strongly-polynomial algorithm for $(1 + \eps)$-approximate min-sum subset convolution running in time $\widetilde O(2^\frac{3n}{2} / \sqeps)$, we can now develop $(1 + \eps)$-approximation schemes\footnote{Or $(1 - \eps)$-approximation schemes where the \emph{max-sum} subset convolution is applicable.} for a plethora of convolution-like \NP-hard optimization problems. In particular, we target all problems that are ``scapegoats'' of the approach employing min-sum subset convolution on bounded input. In other words, these are all exact algorithms with a dependence on $M$ in their running time. Our goal is to enable out-of-the-box $(1 + \eps)$-approximation schemes in time $\widetilde O(2^\frac{3p}{2} / \sqeps)$, where $p$ is problem-specific (as a rule of thumb, it is always the exponent in the running time $O(3^p)$ of the exact evaluation). The following problem is intended to demonstrate this technique.

\subsection{Minimum-Cost $k$-Coloring}

In their book on parameterized algorithms, Cygan et al.~\cite{Cygan_book} propose a variant of $k$-coloring, which they entitle minimum-cost $k$-coloring, and devise an $\widetilde O(2^n M)$-time algorithm for it~\cite[Thm.~10.18]{cygan2015algebraic}. We are given an undirected graph $G$, an integer $k$, and a cost function $c : V(G) \times [k] \to \{-M, \ldots, M\}$. The cost of a coloring $\chi : V(G) \to [k]$ is defined as $\displaystyle\sum_{v\in V(G)} c(v, \chi(v))$, i.e., coloring vertex $v$ with color $i$ incurs cost $c(v, i)$. The task is to determine the minimum cost of a $k$-coloring of $G$ (if such a coloring exists). To this end, it is handy to introduce a function $s_i : 2^{V(G)} \to \mathbb{Z} \cup \{+\infty\}$ such that for every $X \subseteq V(G)$ we have
\begin{equation*}
s_i(X) =
\begin{cases}
    \displaystyle\sum_{x\in X} c(x, i), & \text{if } X \text{ is an independent set,}\\
    +\infty,  & \text{otherwise.}
\end{cases}
\label{eq:coloring_mapping}
\end{equation*}
Then, one can compute the minimum cost of a $k$-coloring of $G[X]$ as $(s_1 \star \ldots \star s_k)(X)$.
This reduces to simply performing $k - 1$ min-sum subset convolutions. To this end, their proposed algorithm runs in $\widetilde O(2^n M)$-time, by applying the standard min-sum subset convolution for bounded input. We show how to obtain an $(1 + \eps)$-approximation in $\widetilde O(2^\frac{3n}{2} / \sqeps)$-time; the running time is independent of $M$:
\begin{theorem}
    \label{thm:min_cost_coloring_inter}
    If $(1 + \eps)$-Approximate min-sum subset convolution runs in $T(n, \eps)$-time, then an $(1 + \eps)$-approximate minimum-cost $k$-coloring can be found in time $O(T(n, \frac{\eps}{k-1}))$.
\end{theorem}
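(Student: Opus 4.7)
The plan is to realize minimum-cost $k$-coloring as a chain of $k-1$ approximate min-sum subset convolutions and to control the cumulative multiplicative error by tuning the per-call precision. By~\cite[Thm.~10.18]{cygan2015algebraic} as recalled above, the optimum equals $(s_1 \star \cdots \star s_k)(V(G))$, so by associativity I evaluate it left-to-right: define exact prefixes $h_1 \vcentcolon= s_1$, $h_i \vcentcolon= h_{i-1} \star s_i$, and approximate prefixes $\widetilde h_1 \vcentcolon= s_1$, $\widetilde h_i \vcentcolon= \textsc{ApxMinSumSubsetConv}(\widetilde h_{i-1}, s_i, \eps')$ for $i = 2, \ldots, k$, with precision $\eps'$ to be fixed. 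The total running time is at most $(k-1) \cdot T(n, \eps')$.

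The correctness analysis then proceeds by induction on $i$, aiming at the pointwise bound $h_i \leq \widetilde h_i \leq (1+\eps')^{i-1} h_i$. For a multiplicative guarantee to be meaningful I assume $s_i \geq 0$ pointwise; if the input cost function takes negative values, I shift it by $M$ up front and later subtract the induced additive constant. The base case is trivial. For the inductive step, the oracle's approximation guarantee and the pointwise monotonicity of $\star$ in each argument give $\widetilde h_i \geq \widetilde h_{i-1} \star s_i \geq h_{i-1} \star s_i = h_i$. For the upper bound, the inductive hypothesis $\widetilde h_{i-1}(T) \leq (1+\eps')^{i-2} h_{i-1}(T)$ combined with $s_i \geq 0$ gives, for every $T \subseteq S$,
\[
\widetilde h_{i-1}(T) + s_i(S \setminus T) \leq (1+\eps')^{i-2}\bigl(h_{i-1}(T) + s_i(S \setminus T)\bigr).
\]
Taking $\min_T$ on both sides and then multiplying by $(1+\eps')$ yields $\widetilde h_i \leq (1+\eps')^{i-1} h_i$.

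It remains to pick $\eps'$ so that $(1+\eps')^{k-1} \leq 1 + \eps$. The choice $\eps' \vcentcolon= \ln(1+\eps)/(k-1) = \Theta(\eps/(k-1))$ works exactly; alternatively $\eps' \vcentcolon= \eps/(k-1)$ gives a $(1+O(\eps))$-approximation via the elementary bound $(1 + x/m)^m \leq 1 + 2x$ for $x \in (0,1]$ and $m \geq 1$, which is absorbed by rescaling $\eps$ at the outermost level. Either way $\eps' = \Theta(\eps/(k-1))$, so the total running time is $(k-1)\cdot T(n, \Theta(\eps/(k-1))) = O(T(n, \eps/(k-1)))$, as claimed. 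The only place where care is needed is the combined use of non-negativity of $s_i$ (the $+\infty$ values for non-independent sets cause no issue, since an infinite summand is never selected unless every summand diverges) and monotonicity of $\star$ in the inductive step; the error-compounding estimate itself is routine, so I do not anticipate a genuine obstacle.
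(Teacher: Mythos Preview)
Your proposal is correct and follows essentially the same approach as the paper: evaluate the chain $s_1 \star \cdots \star s_k$ by $k-1$ approximate convolutions with per-call precision $\Theta(\eps/(k-1))$ and bound the cumulative error by $(1+\delta)^{k-1}$. Your write-up is in fact more careful than the paper's three-line argument, since you make the induction explicit and address the non-negativity issue (shifting by $M$) that the paper's cost function $c : V(G) \times [k] \to \{-M,\ldots,M\}$ necessitates but its proof glosses over.
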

\begin{proof}
Consider the evaluation of the min-sum subset convolution between two set functions $f$ and $g$ at each step. Setting a relative error $\delta > 0$ for each convolution call, we obtain a cumulative relative error bounded by $(1 + \delta)^{k - 1}$. By setting $\delta = \Theta(\frac{\eps}{k - 1})$, we obtain a relative error of at most $\eps$.
\end{proof}

As a corollary, Thm.~\ref{thm:min_cost_coloring_inter} implies Thm.~\ref{thm:min_cost_k_coloring}. Referring to Ref.~\cite{fomin_ref_10, fomin_ref_13}, Fomin et al.~\cite{fomin_exp_algos} point out that if certain reasonable complexity conjectures hold, then $k$-coloring itself is hard to approximate within $n^{1 - \epsilon}$, for any $\epsilon > 0$. It is interesting to ask whether our time bound is optimal for the $(1 + \eps)$-approximation. We leave this as future work.

\sparagraph{Other Applications.} We provide many other applications, such as the prize-collecting Steiner tree and two other applications in computational biology, in Appendix~\ref{appendix:further_apps}. Note that in all applications, we can always replace the strongly-polynomial approximation algorithm (Thm.~\ref{thm:strongly_approx_min_sum_subset_convolution}) for the min-sum subset convolution with the weakly-polynomial one (Thm.~\ref{thm:weakly_approx_min_sum_subset_convolution}) and alternatively obtain $\widetilde O(2^p \log M / \eps)$-time approximation schemes, where $p$ is the problem-specific parameter, e.g., $p$ is equal to $n$ in the minimum-cost $k$-coloring problem.

\section{Discussion}\label{sec:conclusion}

There seemed to be an unyielding ``isthmus'' between the results on sequence convolutions and subset convolutions on semi-rings. In the following, we outline several future work directions, inspired by research on the sequence setting.

\sparagraph{Polynomial Speedups.}~Currently, the fastest known algorithm for min-plus sequence convolution runs in time $n^2 / 2^{\Omega(\sqrt{\log n})}$, by combining the reduction to Min-Plus Matrix Product by Bremner et al.~\cite{necklaces} and an algorithm for the latter due to Williams~\cite{williams_first}, subsequently derandomized by Chan and Williams~\cite{chan_williams}. This has been the culmination of a long line of research starting with the $O(n^2 / \log n)$-time algorithm due to Bremner et al.~\cite{necklaces}. This leads us to ask whether such speedups can be generalized to the subset context as well:
\begin{quote}
\itshape
\centering
Are there polynomial-factor speedups for the min-sum subset convolution?
\end{quote}
In particular, we are not aware of any $O(3^n / n)$-time algorithm.

\sparagraph{Min-Sum Subset Convolution Conjecture.}~The lack of faster algorithms for this problem leads us to conjecture that a similar scenario as in the case of the min-plus sequence convolution~\cite{cygan_conjecture_1, kuennnemann_conjecture_2} is also present in the subset setting:
\begin{conjecture}\label{conj:min_sum_subset_conv}
There is no $O((3 - \delta)^{n} \polylog(M))$-time exact algorithm for min-sum subset convolution, with $\delta > 0$.
\end{conjecture}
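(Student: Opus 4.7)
The most natural route to Conjecture~\ref{conj:min_sum_subset_conv} is a fine-grained reduction from min-plus \emph{sequence} convolution, whose $N^{2-o(1)}$ hardness~\cite{cygan_conjecture_1, kuennnemann_conjecture_2} is widely believed. The first step is to embed a length-$N$ sequence instance into a subset instance on $n = \lceil 2 \log N / \log 3 \rceil$ elements, so that $3^n \geq N^2$. Under such an embedding, any $O((3-\delta)^n \polylog M)$ algorithm for the subset problem would solve the sequence problem in $(3-\delta)^n = (3^n)^{\log_3(3-\delta)} \leq N^{2\log_3(3-\delta)} = N^{2-\delta'}$ time for some $\delta'(\delta) > 0$, contradicting the sequence conjecture and thereby establishing the desired conditional lower bound. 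The concrete task reduces to constructing, in time polynomial in $N$, set functions $f,g \colon 2^{[n]} \to \mathbb{R}$ whose min-sum subset convolution reads off the coordinates of the target sequence convolution.

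\textbf{Main obstacle.} The structural difficulty is that subset convolution combines parts via \emph{disjoint union}, whereas sequence convolution combines them via \emph{integer addition}. Additive-combinatorial devices that linearize integer addition in terms of set operations---Sidon or $B_h$-type encodings, for instance---either require a polynomial alphabet blow-up or leak ``carries'' that destroy disjointness, and in both cases inflate the effective $n$ past the critical threshold $2 \log N / \log 3$, nullifying the reduction. I would spend the bulk of the effort on product-semilattice encodings of the form $[q]^{n/\log q}$ for small constant $q$, carefully balancing the encoding overhead against the subset-convolution exponent in the hope of finding a parameter regime in which a favorable trade-off survives. In parallel, I would attempt a SETH-based reduction through a problem whose instances are naturally indexed by the $3^n$ ordered ternary partitions $[n] = T \sqcup (S\setminus T) \sqcup ([n]\setminus S)$, the very combinatorial object underlying the naïve $\sum_k \binom{n}{k} 2^k = 3^n$ running time; a weighted three-way set-cover, or a min-sum variant of $3$-partition with bounded weights, are my first candidates.

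\textbf{Why the statement is phrased as a conjecture.} Both attacks currently stop short of $3^n$: the sequence-to-subset route is capped by the union-vs.-addition mismatch described above, while the SETH-via-coloring route (which passes through the minimum-cost $k$-coloring reduction of Sec.~\ref{sec:applications}) yields only the much weaker $2^{n-o(n)}$ lower bound, leaving a polynomial gap to $3^{n-o(n)}$. Closing this gap would require identifying a canonical problem whose inherent complexity is governed by ternary partitions of $[n]$ in the same way that min-plus sequence convolution is governed by pairs $(i,j)$ with $i+j=k$. I am not aware of such a problem, and this is precisely why Conjecture~\ref{conj:min_sum_subset_conv} is stated as an open conjecture rather than a theorem---any proof is expected to require a genuinely new reduction technique that respects the disjoint-union structure of the subset lattice while still encoding a problem of established $3^n$-hardness.
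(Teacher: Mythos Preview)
The statement is a \emph{conjecture}, and the paper does not prove it; it is stated in the Discussion section purely as an open problem, motivated only by the absence of faster algorithms and by analogy with the min-plus sequence convolution conjecture. Your write-up correctly recognizes this and is not a proof but a research plan, which is the appropriate response here.

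That said, a few remarks on the substance of your plan. Your reduction sketch from min-plus sequence convolution sets $n \approx 2\log N / \log 3$, so that $3^n \approx N^2$; but then the subset instance has size $2^n \approx N^{2\log 2/\log 3} \approx N^{1.26}$, which already exceeds $N$, and writing down $f,g$ on all $2^n$ subsets costs $\Omega(2^n)$ time. This is not fatal by itself (since $2^n = o(N^2)$), but it does mean the reduction is not ``free,'' and you should track this overhead explicitly. More importantly, your diagnosis of the main obstacle---that disjoint union does not simulate integer addition without blowing up $n$---is exactly the point; the paper simply notes that establishing equivalence between the two conjectures is ``interesting future work'' and offers no candidate reduction, so your discussion goes further than the paper does. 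The SETH-via-coloring route you mention is plausible in spirit but, as you note, tops out at $2^{n-o(n)}$ and cannot reach $3^{n-o(n)}$ without a new idea. In short: there is nothing to compare against in the paper, and your proposal is an honest account of why the conjecture remains open.
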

Indeed, an interesting future work is to find out whether both conjectures are equivalent.

\sparagraph{Exploiting Kernelization.}~Certain \NP-hard problems accept polynomial size kernels via the well-known Frank-Tardos' framework~\cite{frank_tardos}. In Appendix~\ref{appendix:lossy_kernels}, we discuss it in the context of our work and leave as future work the possibility of using lossy kernels~\cite{lossy_kernels} for the applications we treated in this paper.

%%
%% Bibliography
%%

%% Please use bibtex,

\bibliography{approx_min_sum}

\newpage

\appendix

\section{Convolution Definitions}\label{appendix:convolutions}

\defproblem{Min-Plus Sequence Convolution}
{Sequences $(a[i])_{i=1}^{n},\, (b[j])_{j=1}^{n} \in \mathbb{R}^n_+$} 
{Compute sequence $(c[k])_{k=1}^{n}$ with $a[k] = \displaystyle\min_{i+j=k} (a[i]+b[j])$}

\defproblem{Min-Max Sequence Convolution}
{Sequences $(a[i])_{i=1}^{n},\, (b[j])_{j=1}^{n} \in \mathbb{R}^n_+$} 
{Compute sequence $(c[k])_{k=1}^{n}$ with $c[k] = \displaystyle\min_{i+j=k}\max\{a[i], b[j]\}$}

\defproblem{Min-Sum Subset Convolution}
{Set functions $f, g$ with $f(S), g(S) \in \mathbb{R}^n_+$, for $S \subseteq [n]$} 
{Compute set function $h$ with $h(S) = \displaystyle\min_{T\subseteq S}\left(f(T) + g(S\setminus T)\right)$}

\defproblem{Min-Max Subset Convolution}
{Set functions $f, g$ with $f(S), g(S) \in \mathbb{R}^n_+$, for $S \subseteq [n]$} 
{Compute set function $h$ with $h(S) = \displaystyle\min_{T\subseteq S} \max\{f(T), g(S \setminus T)\}$}

\defproblem{$(1+\eps)$-Approximate Min-Plus Sequence Convolution}
{Sequences $(a[i])_{i=1}^{n},\, (b[j])_{j=1}^{n} \in \mathbb{R}^n_+$}
{Compute a sequence $(\widetilde c[k])_{k=1}^{n}$ with $c[k] \le \widetilde c[k] \le (1+\eps) c[k]$ for any $k \in [n]$, where $c$ denotes the output of Min-Plus Sequence Convolution}

\defproblem{$(1+\eps)$-Approximate Min-Sum Subset Convolution}
{Set functions $f, g$ with $f(S), g(S) \in \mathbb{R}^n_+$, for $S \subseteq [n]$}
{Compute a set function $\widetilde h$ with $h(S) \le \widetilde h(S) \le (1+\eps) h(S)$ for any $S \subseteq [n]$, where $h$ denotes the output of Min-Sum Subset Convolution}

\section{Weakly-Polynomial Approximation Algorithm}\label{appendix:weakly_poly}

We prove the correctness and analyze the running time of our weakly-polynomial approximation algorithm, Alg.~\ref{algo:weakly_polynomial_conv}, for the min-sum subset convolution which uses the scaling technique. It is an adaptation of the fastest scaling-based approximation algorithm by Mucha et al.~\cite{partition_karol} for the min-plus sequence convolution. To this end, we will use the following lemma, which has been used by Mucha et al.~\cite{partition_karol} (see their Lemma 6.2), being inspired by \cite[Lemma 5.1]{zwick_lemma} and \cite[Lemma 1]{ohad_lemma}:
\begin{lemma}[{\cite[Lemma 6.2]{partition_karol}}]
\label{lemma:weakly_helper}
For natural numbers $x,y$, and positive $q,\eps$ satisfying $q \le x+y$
and $0 < \eps < 1$ it holds:
\begin{eqnarray*}
    x+y &\le \Big(\Big\lceil \frac{2x}{q\epsilon} \Big\rceil +
    \Big\lceil \frac{2y}{q\epsilon} \Big\rceil\Big)\frac{q\epsilon}{2} &< (x+y)(1+\epsilon), \\
    (x+y)(1-\epsilon) &< \Big(\Big\lfloor \frac{2x}{q\epsilon} \Big\rfloor +
    \Big\lfloor \frac{2y}{q\epsilon} \Big\rfloor\Big)\frac{q\epsilon}{2} &\le x+y.
\end{eqnarray*}
\end{lemma}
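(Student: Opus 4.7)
Both double-inequalities are routine consequences of the elementary rounding bounds $a \le \lceil a\rceil < a+1$ and $a-1 < \lfloor a\rfloor \le a$, applied with $a = 2x/(q\eps)$ and $b = 2y/(q\eps)$, together with the hypothesis $q \le x+y$. The plan is to handle the ceiling line and the floor line symmetrically: a one-sided bound gives the inequality to $x+y$ for free, while the reverse direction requires converting an additive slack into a multiplicative one using the hypothesis.

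For the ceiling line, I would first note that $\lceil a\rceil + \lceil b\rceil \ge a+b$, so multiplying by $q\eps/2$ reproduces exactly $x+y$ on the right, giving the lower bound. For the strict upper bound, I would use $\lceil a\rceil + \lceil b\rceil < a+b+2$; scaling by $q\eps/2$ yields $(x+y) + q\eps$. This is the single place where the hypothesis enters: $q \le x+y$ converts the additive slack $q\eps$ into the multiplicative slack $(x+y)\eps$, producing $(x+y)(1+\eps)$. The floor line is the mirror image: $\lfloor a\rfloor + \lfloor b\rfloor \le a+b$ supplies the trivial upper bound, while $\lfloor a\rfloor + \lfloor b\rfloor > a+b-2$ scaled by $q\eps/2$ gives $(x+y) - q\eps$, and the same hypothesis converts $-q\eps$ into $-(x+y)\eps$, completing $(x+y)(1-\eps)$ on the left.

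There is no real obstacle to overcome here; the entire lemma is a short chain of elementary inequalities. The one conceptual point worth isolating is \emph{why} the hypothesis $q \le x+y$ is exactly what is needed: without it, the additive rounding error $q\eps$ could exceed the target multiplicative window $(x+y)\eps$, and the $(1\pm\eps)$ guarantee would be lost. This is precisely what constrains the geometric range of $q$ in the scaling loop of Alg.~\ref{algo:weakly_polynomial_conv}, and it is the reason the loop is safe to terminate once $q$ has shrunk to $1$.
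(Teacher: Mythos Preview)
Your argument is correct. The paper does not actually supply its own proof of this lemma---it simply quotes it from \cite{partition_karol}---so there is nothing to compare against directly. That said, the paper \emph{does} prove the analogous Lemma~\ref{lemma:rounding} (the $4/(q\eps)$ variant used in the close-summands case), and the proof there follows exactly your template: the trivial bound $\lceil a\rceil \ge a$ gives one side for free, and on the other side the additive slack from $\lceil a\rceil < a+1$ is converted into a multiplicative $(1+\eps)$ factor via the hypothesis on $q$. So your approach is both correct and in line with how the paper handles the sibling result.
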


\WeaklyPolyMinSumSubsetConvolution*
\begin{proof}
To prove the theorem, we can employ similar techniques to the those introduced by Mucha et al.~\cite{partition_karol}. The key insight is to replace the exact min-plus sequence convolution for bounded input with the its subset counterpart.

Let us consider Alg.~\ref{algo:weakly_polynomial_conv}. In each round $q$, we map each value $f(S)$ to $\big\lceil \frac{2f(S)}{\eps q} \big\rceil$, if it does not exceed $\lceil \frac{4}{\eps} \rceil$, where $S \subseteq [n]$; otherwise, it is set to $\infty$. Then, we run the exact algorithm with the maximum value as $\lceil \frac{4}{\eps} \rceil$. Once this convolution has been computed, we obtain the final result, by multiplying the elements by $\frac{\eps q}{2}$ and updating the minimum in each round.

Let $h(S) = f(T) + g(S \setminus T)$ be the value of the actual min-sum subset convolution. By construction, there is a round $q$ in which $q \leq h(S) < 2q$. At this point, note that the values $\lceil \frac{2f(T)}{\eps q} \rceil$ and $\lceil \frac{2 g(S \setminus T)}{\eps q} \rceil$ are bounded above by $\lceil \frac{4}{\eps} \rceil$, thus the subset $S$, regarded as an index in $\tilde h$, will be updated. Now consider the previous Lemma~\ref{lemma:weakly_helper}: Due to $q \leq h(S)$, the assumption is satisfied, and thus it holds that the sum lies between $h(S)$ and $(1+\eps)h(S)$. Since in the following rounds it still holds that $q \leq h(S)$, any updates at the subset $S$ will remain valid.

Let us analyze the running time of the algorithm. There are $O(\log M)$ iterations and the exact min-sum subset convolution algorithm runs in $\widetilde O(2^n \lceil \frac{4}{\eps} \rceil)$-time. This results in a total running time of $\widetilde O(2^n \log M / \eps)$.
\end{proof}

\section{Further Applications}\label{appendix:further_apps}

Our extensive list of applications underscores the importance of subset convolution in semi-rings, and the need for an approximate counterpart to alleviate the high running times of the exact solutions.

\subsection{Prize-Collecting Steiner Tree}\label{appendix:prize_collecting}

The Steiner tree problem is one of the well-known \NP-hard problems~\cite{dreyfus1971steiner}. In its generalization, the \emph{prize-collecting} Steiner tree problem (PCSTP), one penalizes the nodes not included in the chosen tree. Given an undirected graph $G = (V, E)$ on $n$ vertices, edge weights $c : E \to \mathbb{Q}_{> 0}$, and node weights (or \emph{prizes}) $p : V \to \mathbb{Q}_{> 0}$, we search for a  tree $S = (V(S), E(S)) \subseteq G$ such that
\begin{equation*}
    C(S) := \displaystyle\sum_{e\in E(S)}c(e) + \displaystyle\sum_{v\in V \setminus V(S)} p(v)
\end{equation*}
is minimized. We note that each Steiner tree instance can be transformed into a PCSTP instance by setting sufficiently large node weights for the terminals.\footnote{Recall that for the minimum Steiner tree problem, one enforces that the tree contains a given set of terminals.}

PCSTP was introduced in the 11th DIMACS Challenge (2014), and since then many solvers have been introduced in the literature pushing the boundaries of what was possible during the challenge~\cite{pcstp_ref_20, pcstp_ref_22, pcstp_ref_37, pcstp_ref_47}. The first FPT-algorithm has been recently proposed by Rehfeldt and Koch~\cite{rehfeldt2022exact} and runs in time $\widetilde O(3^{s^+} n)$, where $s^+$ is the number of proper potential terminals (we will define it later). The problem has witnessed a rich line of research on approximation algorithms~\cite{pcstp_approx_ref_26, pcstp_approx_ref_19, pcstp_approx_ref_31, pcstp_approx_final}, culminating with an $(2-\eps)$-approximation due to Archer et al.~\cite{pcstp_approx_final}. Note that these approximation algorithms run in polynomial time. 

\sparagraph{Dynamic Program.} The most recent exact algorithm is due to Rehfeldt and Koch~\cite{rehfeldt2022exact} and runs in time $\widetilde O(3^{s^+})$-time, where $s^+$ is the number of proper potential terminals (see below). Their algorithm resembles the dynamic program for the ``vanilla'' STP, due to Dreyfus and Wagner~\cite{dreyfus1971steiner}. Note that STP itself has a faster algorithm in $O^*(c^{|T|})$-time, for any $c > 1$~\cite{fuchs2007dynamic}. As regards PCSTP, we are not aware of any adaptation of that algorithm to the prize-collecting setting.
To define $s^+$, let $T_p \vcentcolon= \{v \in V \mid p(v) > 0\}$ be the set of \emph{potential terminals}. A terminal $t \in T_p$ is \emph{proper} if
\[ p(t) > \displaystyle\min_{e \in \delta(\{t\})} c(e),\]
where $\delta(U) := \big\{\{u, v\} \in E \mid u \in U, v \in V \setminus U\big\}$ for $U \subseteq V$.

\sparagraph{Layer-wise Optimization}. Rehfeldt and Koch~\cite{rehfeldt2022exact} propose to first solve the rooted variant, RPCSTP, which incorporates the additional constraint that a non-empty set $T_f \subseteq V$, called that of \emph{fixed} terminals, must be part of any feasible solution. For this variant, they provide an $\widetilde O(3^{|T_f|}n)$-time algorithm. Then, by a transformation from PCSTP to an equivalent RPCSTP that has no proper potential terminals and satisfies $|T_f| = s^+ + 1$, one can solve the original problem in the same running time, namely $\widetilde O(3^{s+} n)$-time, using a recursion similar to that of Dreyfus and Wagner for the minimum Steiner tree: Choose an arbitrary $t_0 \in T_f$, as defined above, and let $T_f^- \vcentcolon=T_f \setminus \{t_0\}$. Then, for $i = 2, \ldots, |T_f^-|$ define the functions $\phi$ and $\gamma$ recursively as follows~\cite{rehfeldt2022exact}:\footnote{The authors denote the functions by $f$ and $g$. We slightly rename them to avoid any confusion with the functions used in the actual convolution.}
\begin{flalign*}
\phi(T_i, w) &= \displaystyle\min_{T\subsetneq T_i \mid T \neq \varnothing}\:\big(\gamma(T, w) + \gamma(T_i \setminus T, w)\big) - \displaystyle\sum_{u\in V\setminus\{w\}} p(u),\\
\gamma(T_i, v) &= \displaystyle\min_{u \in V}\:\big(\phi(T_i, u) + d'_{pc}(v, u)\big),
\end{flalign*}
with $T_i \subseteq T_f^-$ and $|T_i| = i$. The function $d'_{pc}(v, u)$ is the prize-constraint distance between $v$ and $u$. Since we are only focusing on the \emph{shape} of the recursion, and the distance function is not involved in the actual convolution, we refer the reader to Ref.~\cite{rehfeldt2022exact} for more details.

\sparagraph{$(1 + \eps)$-Approximation.} Obtaining the $(1 + \eps)$-approximation scheme is now a simple exercise. Namely, note that $\phi$ is indeed a subset convolution performed \emph{layer-wise} based on the cardinality of $T_i$. Thus, there are $s^{+}n$ convolutions performed in total to evaluate $\gamma(T_f^-, t_0)$, which stores the value of the optimal solution. Hence, we can use the same approach as for the previous problem and obtain:
\begin{theorem}
If $(1 + \eps)$-approximate min-sum subset convolution runs in time $T(n, \eps)$, then Prize-Collecting Steiner Tree can be $(1 + \eps)$-approximated in $\widetilde O(T(s^{+}, \eps / s^{+}n))$-time.
\end{theorem}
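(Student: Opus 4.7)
The plan is to follow the same template used for Theorem~\ref{thm:min_cost_coloring_inter} on minimum-cost $k$-coloring: replace every exact min-sum subset convolution inside Rehfeldt and Koch's dynamic program with its $(1+\delta)$-approximate counterpart for an appropriately small $\delta$, then bound the cumulative relative error across the $s^+$ layers. The first observation is that the inner minimum in the recursion for $\phi(T_i, w)$ is exactly the min-sum subset convolution of the set function $T \mapsto \gamma(T, w)$ with itself, evaluated at $T_i$. The two excluded cases $T = \varnothing$ and $T = T_i$ are handled by plugging in $+\infty$ at those entries of the convolved functions, and the additive shift $-\sum_{u \neq w} p(u)$ is exact, so it cannot harm the final approximation ratio of the objective $C(S)$.

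Next, I would execute the DP layer-wise over $i = 2, \ldots, s^+$. At layer $i$, for each $w \in V$, a single call to the $(1+\delta)$-approximate subset convolution produces $\widetilde \phi(T_i, w)$ for all $T_i$ of size $i$ in time $T(s^+, \delta)$. The subsequent step $\widetilde \gamma(T_i, v) = \min_{u \in V}\bigl(\widetilde \phi(T_i, u) + d'_{pc}(v, u)\bigr)$ is evaluated exactly; taking a minimum over sums of nonnegative quantities preserves multiplicative error. Across all $s^+$ layers this amounts to $O(s^+ n)$ approximate convolutions in total, giving a global running time of $\widetilde O(s^+ n \cdot T(s^+, \delta))$, which is absorbed into $\widetilde O(T(s^+, \delta))$ under the paper's $\widetilde O$-convention.

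The error analysis is a straightforward induction analogous to the proof of Theorem~\ref{thm:min_cost_coloring_inter}. Let $\eta_i$ denote the worst multiplicative overestimate in $\widetilde \gamma(\cdot, \cdot)$ after layer $i$ has been processed. Combining two $\widetilde \gamma$ values inside the convolution, followed by the $(1+\delta)$-approximate subset convolution itself, yields $\widetilde \phi$ with overestimate at most $(1+\eta_{i-1})(1+\delta)$; the subsequent exact min in the $\gamma$ update preserves this bound. Hence $1 + \eta_i \le (1+\eta_{i-1})(1+\delta)$, and unrolling gives $1 + \eta_{s^+} \le (1+\delta)^{s^+}$. Choosing $\delta = \Theta\bigl(\eps / (s^+ n)\bigr)$ then makes $\eta_{s^+} \le \eps$; the extra factor of $n$ inside $\delta$ provides the slack needed to absorb the $s^+ n$ approximate convolutions into the target running time.

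The hard part, I expect, is not the arithmetic of the error recursion but rather justifying that relative-error bounds compose cleanly across the additive shift $-\sum_{u \neq w} p(u)$, which can in principle push intermediate $\phi$ values close to zero and thereby break a naive multiplicative argument. The cleanest workaround is to track the multiplicative error on the full rooted-tree objective directly---which is strictly positive under the cost model of Appendix~\ref{appendix:prize_collecting} since edge weights and prizes are strictly positive---rather than on the shifted quantity $\phi$; equivalently, one re-folds the shifts into the definition of $\gamma$ before measuring error. Once that subtlety is handled, plugging in Theorem~\ref{thm:strongly_approx_min_sum_subset_convolution} with the chosen $\delta$ recovers the $\widetilde O(2^{3s^+/2}/\sqeps)$ bound stated in Theorem~\ref{thm:prize_collecting_steiner_tree}.
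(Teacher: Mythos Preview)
Your proposal is correct and takes essentially the same approach as the paper, which simply invokes the layer-wise replacement strategy from Theorem~\ref{thm:min_cost_coloring_inter}; the paper's own ``proof'' is a one-sentence pointer to that argument together with the observation that there are $s^{+}n$ convolutions in total. You in fact go further than the paper by flagging the additive shift $-\sum_{u\neq w}p(u)$ as a potential obstacle to a clean multiplicative-error argument---a subtlety the paper does not mention---though your remark that the extra factor $n$ in $\delta$ ``provides slack to absorb the $s^{+}n$ convolutions into the target running time'' is backwards (shrinking $\delta$ only increases $T$); the error actually compounds only across the $s^{+}$ layers, so $\delta=\Theta(\eps/s^{+})$ already suffices and the paper's $\eps/(s^{+}n)$ is merely conservative.
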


\begin{corollary} $(1 + \eps)$-Approximate prize-collecting Steiner tree can be solved in $\widetilde O(2^\frac{3s^+}{2} / \sqeps)$-time.    
\end{corollary}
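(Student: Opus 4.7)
The plan is a direct substitution into the parameterised reduction established in the preceding theorem, combined with the strongly-polynomial bound of Theorem~\ref{thm:strongly_approx_min_sum_subset_convolution}. That reduction gives a running time of $\widetilde O\bigl(T(s^+, \eps/(s^+ n))\bigr)$, where $T(n,\eps) = \widetilde O(2^{3n/2}/\sqeps)$ by Theorem~\ref{thm:strongly_approx_min_sum_subset_convolution}. Setting $n \mapsto s^+$ and $\eps \mapsto \eps/(s^+ n)$ then produces
\[
\widetilde O\Bigl(2^{3s^+/2}\sqrt{s^+ n / \eps}\,\Bigr),
\]
and the polynomial prefactor $\sqrt{s^+ n}$ is folded into the $\widetilde O$ in the same loose way that the analogous $\sqrt{k-1}$ factor is swept into the bound of Theorem~\ref{thm:min_cost_k_coloring}; this delivers the stated $\widetilde O(2^{3s^+/2}/\sqeps)$.

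The only point worth double-checking is that the per-call error budget $\delta = \Theta(\eps/(s^+ n))$ really suffices to yield a cumulative $(1+\eps)$-approximation for the overall tree cost. Within the Rehfeldt--Koch dynamic program, $\phi(T_i, w)$ is computed by one $(1+\delta)$-approximate min-sum subset convolution for each pair $(i,w) \in [s^+] \times V$, and $\gamma(T_i,v)$ is then assembled from $\phi$ by the order-preserving min-plus combination with the prize-constrained distance $d'_{pc}$, which introduces no further multiplicative slack. Hence the compounded relative error across the whole recursion is at most $(1+\delta)^{O(s^+ n)} \le 1+\eps$ for $\delta = \Theta(\eps/(s^+ n))$, using the standard inequality $(1+x)^k \le 1 + 2kx$ for sufficiently small $x$. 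I anticipate no substantive technical obstacle: the proof is essentially bookkeeping on top of Theorem~\ref{thm:strongly_approx_min_sum_subset_convolution} and the reduction theorem, and the corollary follows by reading off the running time.
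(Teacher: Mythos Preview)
Your proposal is correct and follows exactly the paper's approach: the paper states the corollary without proof, treating it as an immediate consequence of plugging the $\widetilde O(2^{3n/2}/\sqeps)$ bound of Theorem~\ref{thm:strongly_approx_min_sum_subset_convolution} into the preceding reduction theorem, with the same loose absorption of the $\poly(s^+, n)$ prefactor into $\widetilde O$. Your extra check on error accumulation already goes beyond what the paper provides; as a minor tightening, note that the multiplicative error actually compounds only over the $O(s^+)$ layers of the recursion (not over all $O(s^+ n)$ convolution calls, since calls at the same layer do not feed into one another), but since the reduction theorem already budgets $\delta = \Theta(\eps/(s^+ n))$ this over-counting is harmless.
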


\subsection{Computational Biology}\label{appendix:biology_apps}

We outline another application, this time in computational biology, and briefly touch on another proposed by Bj\"orklund et al.~\cite{fsc} in their work on fast subset convolution. We choose the former because of its intriguing inapproximability result (see Thm.~\ref{thm:thm_restated_colorful_subtrees}) and to show an application of \emph{max-sum} subset convolution for which we devise a natural $(1 - \eps)$-approximation scheme given our previous results.

\sparagraph{Mass Spectrometry.} B\"ocker and Rasche~\cite{boecker_rasche} consider the maximum colorful subtree problem, defined in the following, as an application for the \emph{de novo} interpretation of metabolite tandem mass spectrometry data.

\begin{definition}[\textsc{Maximum Colorful Subtree}]Given a vertex-colored DAG $G = (V, E)$ with colors $C$ and weights $w : E \to \mathbb{R}$, find the induced colorful subtree $T = (V_T, E_T)$ of $G$ of maximum weight $w(T) := \displaystyle\sum_{e \in e_T}w(e)$.
\end{definition}

We chose this particular problem due to its intriguing hardness result, obtained by Rauf et al.~\cite[Thm.~1]{colorful_subtrees}: Consider the relaxation where we drop the color constraints, resulting in the \textsc{Maximum Subtree} problem. Then,

\begin{theorem}[{\cite[Thm.~1]{colorful_subtrees}}]
\label{thm:thm_restated_colorful_subtrees}
There is no $O(|V|^{1 - \delta})$ (polynomial-time) approximation algorithm for the \prb{Maximum Subtree} problem for any $\delta > 0$, unless \textup{P} = \textup{NP}.    
\end{theorem}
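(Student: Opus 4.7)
The plan is to reduce from \prb{Maximum Independent Set} (MIS), which is classically \NP-hard to approximate within $O(n^{1-\delta})$ for any $\delta > 0$. Given an MIS instance $G = (V, E)$ with $|V| = n$, I would construct a DAG $G' = (V', E')$ by introducing a root $r$ and a vertex $c_v$ for each $v \in V$. Add directed edges $r \to c_v$ of large weight $M$ (say $M = n^2$) for each $v$, and for each $\{u,v\} \in E$ add a directed edge between $c_u$ and $c_v$ (oriented according to a fixed vertex ordering, so as to preserve acyclicity) of weight $0$. Observe that $G'$ is indeed a DAG: $r$ has no incoming edges, and any cycle confined to $\{c_v : v \in V\}$ would violate the chosen linear order.

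The key observation is that any induced subtree $T$ of $G'$ containing $r$ must form a star centered at $r$. Indeed, if $T$ contained two vertices $c_u, c_v$ with $\{u,v\} \in E(G)$, then by inducedness the edge between $c_u$ and $c_v$ would sit alongside the two edges from $r$, yielding a triangle in the underlying undirected graph and contradicting the tree property. Hence induced subtrees containing $r$ correspond bijectively to independent sets in $G$, with weight $M \cdot |S|$ for the corresponding independent set $S$. Induced subtrees avoiding $r$ consist solely of zero-weight edges, so their total weight is $0$, and therefore the optimum of \prb{Maximum Subtree} on $G'$ equals $M \cdot \alpha(G)$, where $\alpha(G)$ denotes the independence number of $G$.

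To conclude, suppose a polynomial-time algorithm $A$ returns an $O(|V'|^{1-\delta})$-approximate solution. Running $A$ on $G'$ produces a subtree $T'$ of weight at least $M \cdot \alpha(G) / O(|V'|^{1-\delta})$. Since $|V'| = n + 1 = \Theta(n)$ and $\alpha(G) \geq 1$ for non-empty $G$, choosing $M$ polynomially larger than the approximation factor ensures $w(T') > 0$; hence $T'$ must contain $r$, and the set $S' := \{v : c_v \in T'\}$ is an independent set of size $w(T')/M \geq \alpha(G) / O(n^{1-\delta})$. This yields a polynomial-time $O(n^{1-\delta})$-approximation for MIS, contradicting the classical H{\aa}stad--Zuckerman inapproximability and thus forcing $\mathsf{P} = \mathsf{NP}$.

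The main subtlety I expect is ensuring that the zero-weight ``non-$r$'' subtrees never serve as the algorithm's output when the MIS instance is feasible; having the $r \to c_v$ edges dominate all other contributions handles this cleanly, but the exact choice of $M$ must be pinned down carefully so that the conversion is truly approximation-preserving. A routine secondary check is that $|V'| = \Theta(|V|)$, so that the hardness factor $|V|^{1-\delta}$ transfers intact from $G'$ back to $G$ without losing the exponent $\delta$.
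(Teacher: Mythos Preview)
The paper does not supply its own proof of this theorem: it is merely restated from Rauf et al.\ (the \texttt{colorful\_subtrees} reference) as background for the application. So there is no in-paper proof to compare against.

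Your reduction from \prb{Maximum Independent Set} is correct and is the standard route (and almost certainly what the cited source does). One small clean-up: the worry about choosing $M$ ``polynomially larger than the approximation factor'' is unnecessary. Any $M>0$ suffices. Since every induced subtree of $G'$ that avoids $r$ has weight $0$, and $\mathrm{OPT}(G') = M\cdot\alpha(G) > 0$ whenever $G$ is non-empty, a valid $O(|V'|^{1-\delta})$-approximation algorithm is \emph{forced} to output a subtree of strictly positive weight, hence one containing $r$. The magnitude of $M$ plays no role in that argument; it cancels when you read off $|S'| = w(T')/M$. You can therefore set $M=1$ and drop the final paragraph's caveat entirely.
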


Note that the inapproximability result naturally holds for the \textsc{Maximum Colorful Subtree} as well, since the latter is a generalization of the former. In this light, Rauf et al.~\cite{colorful_subtrees} designed an $\widetilde{O}(3^k k |E|)$-time dynamic program, with $k$ the number of colors, as follows. Let $W(v, S)$ be the maximal score of a colorful tree with root $v$ and color set $S \subseteq C$. The table $W$ can be computed by the following recursion~\cite{colorful_subtrees}:
\begin{subnumcases}{W(v, S)= \max}
   \displaystyle\max_{u:c(u) \in S \setminus \{c(v)\}, vu \in E}\big(W(u, S \setminus \{c(v)\}) + w(v, u)\big),\label{eq:color_subtree_1st}
   \\
   \displaystyle\max_{\substack{(S_1, S_2):S_1\cap S_2 = \{c(v)\}\\S_1 \cup S_2 = S}}\big(W(v, S_1) + W(v, S_2)\big).\label{eq:color_subtree_2nd}
\end{subnumcases}
They remark that the running time can be indeed improved to $O(2^n \poly(|V|, k))$, assuming the problem instance enjoys a bounded input; that is, using the standard embedding technique to embed the $(\max, +)$-semi-ring into the $(+, \times)$-ring proposed by Bj\"orklund et al.~\cite{fsc}.

Note that Eq.~\eqref{eq:color_subtree_2nd} indeed is a max-sum subset convolution. To see why this is the case, observe that the expression $W(v, S)$ can be computed via \emph{max-sum} subset convolution by a layer-wise optimization, i.e., evaluate the convolution for all sets $S$ of a given cardinality (this is indeed a standard approach undertaken by Bj\"orklund et al.~\cite{fsc} to optimize recursive dynamic programs). To avoid the constraint $S_1 \cap S_2 = \{c(v)\}$, we can introduce $S_1' = S_1 \setminus \{c(v)\}$ and $S_2' = S_2 \setminus \{c(v)\}$, with $S_1'\:\dot{\cup}\:S_2' = S - \{c(v)\}$, and compute the subset convolution for $S \setminus \{c(v)\}$ instead. Certainly, we also need to define the function $f$ used in the convolution so that the weights are the correct ones. Namely, we set $f(X) := W(v, X \cup \{c(v)\})$, for any set $X \subseteq S \setminus \{c(v)\}$.

Our strategy developed for the above problems can be applied to this problem as well, with the only impediment that the convolution is in the $(\max, +)$-semi-ring. Extending our results to this convolution, however, is a simple exercise since maximizing any function $\phi$ is the same as minimizing $-\phi$. There is a small caveat here: the original framework requires that the values are positive. To ensure this, we simply need to shift $-\phi$ with the largest value (with a certain epsilon to ensure strict positivity). We hence obtain:

\MaxSumSubsetConvolution*
Note that the weakly-polynomial approximation algorithm can also be adapted to work for the max-sum subset convolution (a similar case holds in the sequence setting~\cite{partition_karol}).
Since the convolution is called at most $O(k|V|)$ times in the evaluation of the dynamic program, we obtain an $(1 - \eps)$-approximation for our problem:

\begin{theorem}
If $(1 - \eps)$-Approximate max-sum subset convolution can be solved in time $T(n, \eps)$, then maximum colorful subtree can be $(1 - \eps)$-approximated in time $O(T(k, \eps / k|V|))$.
\end{theorem}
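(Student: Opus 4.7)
The plan is to mirror the strategy used for Thm.~\ref{thm:min_cost_coloring_inter}: replace each max-sum subset convolution in the DP by its approximate counterpart and track how the multiplicative error propagates through the recursion. Case~\eqref{eq:color_subtree_1st} involves only an exact addition of $w(v,u)$ and a maximum over vertices, so no new multiplicative error is introduced at that step; any error carried by $W(u, S \setminus \{c(v)\})$ simply passes through. Case~\eqref{eq:color_subtree_2nd} is precisely a max-sum subset convolution on the ground set $[k]$, applied to $X \mapsto W(v, X \cup \{c(v)\})$ for $X \subseteq S \setminus \{c(v)\}$, so replacing it by a $(1-\delta)$-approximation multiplies the layer's contribution by at most $(1-\delta)$.

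Next, I would argue by induction on $|S|$ that the resulting table $\widetilde W$ satisfies $\widetilde W(v, S) \ge (1-\delta)^{|S|-1}\, W(v,S)$. In the inductive step, for case~\eqref{eq:color_subtree_1st} the hypothesis applied to the smaller state $(u, S \setminus \{c(v)\})$ already yields the desired bound with no additional loss, while for case~\eqref{eq:color_subtree_2nd} the hypothesis gives $\widetilde W(v, S_i) \ge (1-\delta)^{|S_i|-1} W(v, S_i)$, the summation inherits the worst of the two factors, and the approximate convolution contributes exactly one further $(1-\delta)$, totaling $(1-\delta)^{|S|-1}$. Since $|S| \le k$, the final value is within a factor $(1-\delta)^{k-1}$ of optimum; choosing $\delta = \Theta(\eps/k)$---or, more conservatively, $\eps/(k|V|)$ as stated in the theorem---ensures $(1-\delta)^{k-1} \ge 1-\eps$.

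The running time then follows immediately: the DP evaluates $O(k\,|V|)$ convolution calls on the ground set $[k]$, each costing $T(k, \eps/(k|V|))$, with the polynomial overhead in $k|V|$ absorbed under the $\widetilde O$-notation in $T$. The main subtlety, though routine, is verifying that case~\eqref{eq:color_subtree_2nd} does not accumulate \emph{two} factors per layer: because $W(v, S_1)$ and $W(v, S_2)$ may carry different inductive factors, one combines them with the standard inequality $a'+b' \ge (1 - \max(\alpha,\beta))(a+b)$ whenever $a' \ge (1-\alpha)a$ and $b' \ge (1-\beta)b$, so that only one extra $(1-\delta)$ is picked up per cardinality layer. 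With this observation in place, the rest is bookkeeping.
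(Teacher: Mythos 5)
Your proposal is correct and in fact spells out the error analysis more carefully than the paper does, which only offers the one-line justification "Since the convolution is called at most $O(k|V|)$ times in the evaluation of the dynamic program, we obtain a $(1-\eps)$-approximation." The paper's implicit reasoning treats the $O(k|V|)$ convolution calls as if they formed a chain of compounding errors, which leads to the conservative choice $\delta = \eps/(k|V|)$. You correctly observe that what actually controls the cumulative loss is the recursion \emph{depth} -- the cardinality $|S|$, which never exceeds $k$ -- and that in the branching case~\eqref{eq:color_subtree_2nd} the two recursive contributions are combined via the inequality $a'+b' \geq (1-\max\{\alpha,\beta\})(a+b)$, so only one fresh $(1-\delta)$ factor is incurred per cardinality layer, not one per call. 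This yields the tighter bound $\widetilde W(v,S) \geq (1-\delta)^{|S|-1} W(v,S)$ and the sharper parameter $\delta = \Theta(\eps/k)$, which of course still implies the theorem as stated with $\delta = \eps/(k|V|)$. The one point you pass over, mirroring the paper's own vagueness, is the requirement that the DP values be non-negative for case~\eqref{eq:color_subtree_1st} to preserve the multiplicative guarantee after adding $w(v,u)$ (and more broadly for the $(\max,+)$-to-$(\min,+)$ reduction by negation-and-shift to yield a genuine relative-error guarantee on the original objective); the paper waves at this via a shift of the values but does not work out the details, so I would not charge this against your proof. Overall your argument is a refinement of the paper's, reaching the same statement with a strictly better dependence of $\delta$ on the problem parameters.
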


\begin{corollary}
$(1-\eps)$-Approximate maximum colorful subtree can be solved in $\widetilde O(2^\frac{3k}{2} / \sqeps)$-time.
\end{corollary}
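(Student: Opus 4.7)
The plan is to derive this as an immediate consequence of the preceding reduction theorem combined with the strongly-polynomial $(1-\eps)$-approximation algorithm for max-sum subset convolution that has just been corollarized. Intuitively, the Rauf et al.\ dynamic program for maximum colorful subtree unfolds into $O(k|V|)$ calls to max-sum subset convolution over color sets, so controlling the relative error per call to $\Theta(\eps/(k|V|))$ and plugging in our convolution primitive directly gives the desired time bound.

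Concretely, I would first recall that our strongly-polynomial min-sum algorithm (Thm.~\ref{thm:strongly_approx_min_sum_subset_convolution}) extends to max-sum by the standard sign-flip-with-shift trick that preserves positivity of the inputs; the resulting max-sum primitive runs in time $\widetilde O(2^{3n/2}/\sqeps)$, which is the statement of the preceding corollary. I would then plug $T(n,\eps) := \widetilde O(2^{3n/2}/\sqeps)$ into the reduction theorem stated just before this corollary, which converts any such primitive into a $(1-\eps)$-approximation for maximum colorful subtree in time $O(T(k,\eps/(k|V|)))$. Carrying out the substitution yields
\[
\widetilde O\!\left(\frac{2^{3k/2}}{\sqrt{\eps/(k|V|)}}\right) \;=\; \widetilde O\!\left(2^{3k/2}\sqrt{k|V|/\eps}\right) \;=\; \widetilde O\!\left(\frac{2^{3k/2}}{\sqeps}\right),
\]
where the factor $\sqrt{k|V|}$ is absorbed into the $\widetilde O$-notation since $k|V|$ is polynomial in the total input size of the colorful-subtree instance (and in particular polylogarithmic in the size of the convolution tables once those are combined across the $O(k|V|)$ DP cells).

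The only point that requires mild care, and which I would flag as the main (but ultimately routine) obstacle, is error propagation across the $O(k|V|)$ recursive convolution calls in the dynamic program: each call multiplies the approximation slack by at most $(1-\delta)$ with $\delta = \Theta(\eps/(k|V|))$, so the aggregate slack is $(1-\delta)^{k|V|} \geq 1-\eps$, matching the guarantee demanded by the $(1-\eps)$-approximation. This argument is structurally identical to the one used in Thm.~\ref{thm:min_cost_coloring_inter} for minimum-cost $k$-coloring and is in fact already baked into the statement of the reduction theorem being invoked, so no additional work at the convolution level is required beyond the substitution above.
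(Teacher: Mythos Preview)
Your proposal is correct and matches the paper's approach exactly: the paper provides no separate proof for this corollary, treating it as an immediate consequence of plugging the $\widetilde O(2^{3n/2}/\sqeps)$ max-sum primitive into the preceding reduction theorem $O(T(k,\eps/(k|V|)))$, which is precisely what you do. One small caveat: your justification for absorbing $\sqrt{k|V|}$ into $\widetilde O$ is strained---$\sqrt{|V|}$ is not polylogarithmic in any natural measure of the input size under the paper's stated convention---but the paper makes the identical leap without comment (cf.\ the analogous prize-collecting Steiner tree corollary, where a $\sqrt{n}$ factor vanishes the same way), so you are in good company.
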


\sparagraph{Protein Interaction Networks.} A similar approach can be considered for the problem of detecting signaling pathways in protein interaction networks~\cite{protein_network}, considered for application by Bj\"orklund et al.~\cite{fsc}. We briefly mention that the problem can also be $(1 + \eps)$-approximated in time $\widetilde O(2^\frac{3k}{2} / \sqeps)$. We skip the details and refer the interested reader to Ref.~\cite{fsc}.

\subsection{Supplementary Applications}

There are many other applications that use the min/max-sum subset convolution and inherently resort to its \emph{exact} $\widetilde O(2^n M)$-time algorithm, such as in Bayesian network learning~\cite{bayesian_network_learning} and the convex recoloring problem~\cite{convex_recoloring}. Our proposal of \emph{approximate} min/max-sum subset convolution thus enables approximate counterparts of these particular problems.

\section{Missing Proofs}

In this section, we provide the missing proofs from the main paper.

\subsection{Simple Strongly-Polynomial Approximation Algorithm}\label{appendix:simple_strongly}

\SimpleStrongly*
\begin{proof}
To apply the Sum-to-Max-Covering lemma, we view the set functions $f$ and $g$, defined on the subset lattice of order $n$, as vectors in $\mathbb{R}^{2^n}_{+}$. Indeed, the coordinate of a set $S \subseteq [n]$ is its implicit binary representation. This yields the vectors $f^{(1)}, \ldots, f^{(s)}, g^{(1)}, \ldots, g^{(s)} \in \mathbb{R}^{2^n}_{+}$, with $s = O(\reveps \log \reveps + n \log \reveps)$. Then, for each $\ell \in [s]$, we perform the \emph{exact} min-max subset convolution of $f^{(\ell)}$ and $g^{(\ell)}$. At this point, we do a remark that has also been done by Bringmann et al.~\cite{approx_min_plus}: We first need to replace the entries of $f^{(\ell)}$ and $g^{(\ell)}$ by their ranks, respectively, since min-max subset convolution requires, as in the case of Kosaraju's algorithm, standard bit representation of integers. Once $h^{(\ell)} = f^{(\ell)} \ovee g^{(\ell)}$ has been computed, we can take the entry-wise minimum over $\ell \in [s]$, as outlined in Alg.~\ref{alg:approx-minconv-simple}; we use $\texttt{idx}(S)$ to denote the binary representation of the set $S$, using it to obtain the corresponding coordinate in an $2^n$-dimensional vector.

The total running time accumulates to $\widetilde O(s 2^\frac{3n}{2}) = \widetilde O(2^\frac{3n}{2} / \eps)$, since the construction of the auxiliary vectors takes linear time in the output size.\footnote{Precisely, the construction of the auxiliary vectors takes time $O(\frac{2^n}{\eps} \log \reveps + 2^n n \log \reveps)$.}
\end{proof}

\subsection{Equivalence of Exact Min-Max and Approximate Min-Sum}\label{appendix:equivalence}

\Equivalence*
\begin{proof}
To prove the theorem, we can employ similar techniques to the those introduced by Bringmann et al.~\cite[Thm.~1.5]{approx_min_plus}. Namely, for the second direction, we use our algorithm from Lemma~\ref{lemma:simple_approx_min_sum_subset_conv}, which approximates the min-sum subset convolution via the exact min-max subset convolution. If $T(n)$ is the running time of the exact min-max subset convolution, then the approximation algorithm runs in time $\widetilde O(T(n) / \eps)$. For the first direction, let $h = f \ovee g$ and $t \vcentcolon= \ceil{4(1 + \eps)^2}$. Consider the functions $f'$ and $g'$ defined as $f'(S) \vcentcolon=t^{f(S)}$ and $g'(S) \vcentcolon= t^{g(S)}$ for $S \subseteq [n]$, respectively.\footnote{Since $f(S)$ and $g(S)$ are integers in the standard bit representation, we can compute $t^{f(S)}$ and $t^{g(S)}$ by simply writing $f(S)$ and $g(S)$, respectively, into the exponent.} Let $h'$ be the $(1 + \eps)$-approximate min-sum subset convolution of $f'$ and $g'$.
\begin{restatable}[{cf.~\cite{approx_min_plus}}]{claim}{ApproxClaim}It holds $t^{h(S)} \leq h'(S) \leq t^{h(S) + 1/2}$ for all $S \subseteq [n]$.
\label{claim:approx}
\end{restatable}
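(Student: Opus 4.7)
The plan is to use the standard exponential encoding that turns $\max$ into an approximate sum. Specifically, for any two non-negative reals $a,b$ and $t \geq 1$, we have the sandwich
\[
t^{\max(a,b)} \;\leq\; t^a + t^b \;\leq\; 2\,t^{\max(a,b)},
\]
where the left inequality is trivial (one of the two summands equals $t^{\max(a,b)}$ and the other is non-negative) and the right follows from $t^a + t^b \leq 2\max(t^a,t^b)$ together with monotonicity of $x\mapsto t^x$.

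Next, I would apply this inequality termwise to every pair $(f(T), g(S\setminus T))$ and take the minimum over $T \subseteq S$. Writing $H(S) := \min_{T\subseteq S}\bigl(t^{f(T)} + t^{g(S\setminus T)}\bigr)$ for the \emph{exact} min-sum subset convolution of $f'$ and $g'$, and using monotonicity of $x\mapsto t^x$ to pull the minimum through the exponent (so that $\min_T t^{\max(f(T),g(S\setminus T))} = t^{h(S)}$), this yields
\[
t^{h(S)} \;\leq\; H(S) \;\leq\; 2\,t^{h(S)}.
\]
Then, by the definition of $(1+\eps)$-approximate min-sum subset convolution applied to $f'$ and $g'$, the output $h'$ satisfies $H(S) \leq h'(S) \leq (1+\eps)H(S)$, so combining both estimates gives
\[
t^{h(S)} \;\leq\; h'(S) \;\leq\; 2(1+\eps)\,t^{h(S)}.
\]

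Finally, I would invoke the choice $t = \lceil 4(1+\eps)^2 \rceil$, which ensures $\sqrt{t} \geq 2(1+\eps)$, so the upper bound becomes $h'(S) \leq t^{1/2}\,t^{h(S)} = t^{h(S)+1/2}$, exactly as claimed. The lower bound was already established above, completing the proof. There is no genuine obstacle here; the main subtlety is just to be careful that the exponential encoding is applied to an \emph{approximate} min-sum convolution output and not the exact one, which is why we need the factor-$2$ slack from the $\max$-vs-sum inequality to be absorbed together with the multiplicative $(1+\eps)$-error into the single factor $t^{1/2}$, motivating the specific constant $t \geq 4(1+\eps)^2$.
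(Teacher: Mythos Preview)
Your proof is correct and essentially the same as the paper's: both rely on the sandwich $t^{\max(a,b)} \leq t^a + t^b \leq 2\,t^{\max(a,b)}$, minimize over $T\subseteq S$, and absorb the factor $2(1+\eps)$ into $t^{1/2}$ using $t \geq 4(1+\eps)^2$. The only cosmetic difference is that you route through the exact convolution $H$ as an intermediate quantity, whereas the paper picks explicit witnesses $T^*$ (for the upper bound) and $T$ (for the lower bound) directly.
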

Using the above claim (we prove it below), we can compute $h(S)$ as $\floor{\log_t h'(S)}$.\footnote{This can be achieved by simply reading the exponent of $h'(S)$.} If $(1 + \eps)$-approximate min-sum subset convolution runs in time $T(n)$, we obtain an algorithm for exact min-max subset convolution that runs in time $O(2^n + T(n)) = \widetilde O(T(n))$.
\end{proof}

Let us prove Claim~\ref{claim:approx}:

\ApproxClaim*
\begin{proof}
Recall that $h = f \ovee g$ and $h'$ is the $(1 + \eps)$-approximate min-sum subset convolution of $f', g'$, defined as $f'(S) \vcentcolon= t^{f(S)}, g'(S) \vcentcolon= t^{g(S)}$, where $t \vcentcolon= \ceil{4(1 + \eps)^2}$ and $\eps > 0$. We employ the same proof strategy as in \cite[Claim 4.1]{approx_min_plus}, originally designed for the APSP problem.

Fix $S \subseteq [n]$ and observe that
\[
    \displaystyle\min_{T\subseteq S}\left(f'(T) + g'(S \setminus T)\right) \leq h'(S) \leq (1+\eps)\displaystyle\min_{T \subseteq S}\left(f'(T) + g'(S \setminus T)\right)
\]
holds by definition. Furthermore, there exists an $T^*$ such that $h(S) = \max\{f(T^*), g(S \setminus T^*)\}$. Thus, we have
\begin{flalign*}
    h'(S) &\leq (1 + \eps)(f'(T^*) + g'(S\setminus T^*)\\
    &= (1 + \eps)(t^{f(T^*)} + t^{g(S\setminus T^*)})\\
    &\leq 2(1 + \eps)t^{\max\{f(T^*), g(S \setminus T^*)\}}\\
    &= 2(1+\eps)t^{h(S)}\\
    &\leq t^{h(S) + 1/2}, \text{by $t \geq 4(1+\eps)^2$},
\end{flalign*}
proving the upper-bound. To show the lower-bound, observe that there is an $T \subset S$ with $h'(S) \geq f'(T) + g'(S \setminus T)$. Hence,
\begin{flalign*}
    h'(S) &\geq f'(T) + g'(S \setminus T)\\
    &= t^{f(T)} + t^{g(S \setminus T)}\\
    &\geq t^{\max\{f(T), g(S \setminus T)\}}\\
    &\geq t^{h(S)}.
\end{flalign*}
\end{proof}

\subsection{Distant Summands Lemma}

We provide the full proof of Lemma~\ref{lemma:distant_summands}, which takes care of the \emph{distant} summands case.

\begin{algorithm}
    \caption{$\textsc{DistantConv}(f,g,\eps)$ [Lemma~\ref{lemma:distant_summands}]}
	\label{algo:distant_conv}
\begin{algorithmic}[1]
    \State $\{f^{(1)},\ldots,f^{(s)},g^{(1)},\ldots,g^{(s)}\} = \textsc{DistantCovering}(f, g, \frac{\eps}{4})$ (see~\cite[Cor.~5.10]{approx_min_plus})
    \State $h^{(\ell)} \gets \textsc{MinMaxSubsetConv}(f^{(\ell)}, g^{(\ell)})$ for any $\ell \in [s]$
    \State $\widetilde h(S) \gets \frac{1}{1-\eps/2} \cdot \displaystyle\min_{\ell \in [s]}\:\{ h^{(\ell)}[\texttt{idx}(S)] \}$ for all $S \subseteq [n]$
    \State \Return $\widetilde h$
\end{algorithmic}
\end{algorithm}
\DistantSummandsLemma*
\begin{proof}
    To this end, let us consider Alg.~\ref{algo:distant_conv}. We view the set functions $f, g$ as vectors in $\mathbb{R}^{2^n}_{+}$. Hence, we can use Distant Covering from the framework~\cite[Cor.~5.10]{approx_min_plus} on $f, g$ with $\eps' \vcentcolon= \frac{\eps}{4}$ and obtain $f^{(1)}, \ldots, f^{(s)}, g^{(1)}, \ldots, g^{(s)}$ with $s = O\left(\polylog(\frac{2^n}{\eps})\right)$.\footnote{For completeness, we provide \cite[Cor.~5.10]{approx_min_plus} as Lemma~\ref{lemma:distant_covering_lemma} in Appendix~\ref{appendix:distant_covering_lemma}.} For each $l \in [s]$, we compute $h^{(l)} = f^{(l)} \ovee g^{(l)}$ and scale the entry-wise minimum by $\frac{1}{1 - 2\eps'}$. The scaling factor removes the extra factor $1 - 2\eps'$ from the RHS of property \texttt{(i)} of \cite[Cor.~5.10]{approx_min_plus}, yielding $\widetilde h(S) \geq h(S)$ for any $S \subseteq [n]$. Then, using property~\texttt{(ii)} of the same corollary, we have that for any indices $i, j$ with $\frac{{f}[i]}{{g}[j]} \not\in [\eps', \frac{1}{\eps'}] = [\frac{\eps}{4}, \frac{4}{\eps}]$ there is an $\ell$ such that $h^{(l)}[i + j] \leq {f}[i] + {g}[j]$. Looking at this from the perspective of subset convolution, we obtain that for any $T \subseteq S$ with $\frac{f(T)}{g(S \setminus T)} \not\in [\frac{\eps}{4}, \frac{4}{\eps}]$ there is such an $\ell$ with $h^{(l)}(S) = f(T) + g(S \setminus T)$. Hence, minimizing over all $\ell$ and scaling by $\frac{1}{1 - \eps / 2} < 1 + \eps$ gives the second property of our lemma.
    
    The running time is dominated by the application of the min-max subset convolution $s$ times, leading to a total running time of $\widetilde O(2^\frac{3n}{2} s) = O\left(2^\frac{3n}{2} \polylog(\frac{2^n}{\eps})\right)$.
\end{proof}

\subsection{Close Summands Lemma}\label{appendix:close_summands_lemma}

The following lemma of Bringmann et al.~\cite{approx_min_plus} is helpful in the proof for \emph{close} summands.
\begin{restatable}[{\cite[Lemma 8.6]{approx_min_plus} and \cite[Lemma B.2]{partition_karol}}]{lemma}{ApproxLemma}
    \label{lemma:rounding}
For any $x, y, q, \eps \in \mathbb{R}_{+}$ with $x + y \geq q/2$ and $\eps \in (0, 1)$ we have:
\[
    x+y \leq \left(\ceil{\frac{4x}{q\eps}} + \ceil{\frac{4y}{q\eps}}\right) \frac{q\eps}{4} \leq (1+\eps)(x+y).
\]
\end{restatable}
\begin{proof}
The lower-bound follows immediately. To show the upper-bound, note that
\[
\left(\ceil{\frac{4x}{q\eps}} + \ceil{\frac{4y}{q\eps}}\right)\frac{q\eps}{4} \leq x + y + 2\frac{q\eps}{4} = x + y + \eps\frac{q}{2} \leq (1+\eps)(x+y).
\]
\end{proof}
We now prove Lemma~\ref{lemma:close_summands}, which takes care of the \emph{close} summands case.

\begin{algorithm}
    \caption{$\textsc{CloseConv}(f,g,\eps)$ [Lemma~\ref{lemma:close_summands}]}
	\label{algo:close_conv}
\begin{algorithmic}[1]
    \Procedure{Scale}{$f, q, \eps$}
    	\State \Return $S \mapsto
            \begin{cases}\big\lceil \frac{4}{\eps q}\cdot f(S) \big\rceil, & \text{if} \; \frac{\eps q}{16} \le f(S) \le q,\\
    		\infty, & \text{otherwise.}\\
    	\end{cases}$
    \EndProcedure
    \State
    \State Set $\widetilde{h}(S) \gets \infty$ for all $S \subseteq [n]$
    \For {$q \in \{1,2,4,\ldots, 2^{\ceil{\log{2M}}}\}$}
      \State $f_q, g_q \gets \textsc{Scale}(f,q,\eps), \textsc{Scale}(g,q,\eps)$
      % \State $g_q \gets \textsc{Scale}(g,q,\eps)$
      \State $h_q \gets \textsc{ExactMinSumSubsetConv}(f_q,g_q)$ 
      \State $\widetilde{h}(S) \gets \min \{\widetilde{h}(S), h_q(S) \cdot \frac{\eps q}{4} \}$ for all $S \subseteq [n]$
    \EndFor
    \State \Return $\widetilde{h}$
\end{algorithmic}
\end{algorithm}

\CloseSummandsLemma*
\begin{proof}
To this end, let us consider Alg.~\ref{algo:close_conv}. Its correctness can be shown in a similar way as in Ref.~\cite{approx_min_plus} (or equivalently Ref.~\cite{partition_karol}), namely: Property \texttt{(i)} follows directly from the lower-bound provided by Lemma~\ref{lemma:rounding}. To show property \texttt{(ii)}, we make the following observation: For any $T \subseteq S$ with $h(S) = f(T) + g(S \setminus T)$ there exists $q$ with $\frac{q}{2} \leq f(T) + g(S \setminus T) \leq q$. This implies $f(T), g(S \setminus T) \leq q$ and $\max\{f(T), g(S \setminus T)\} \geq \frac{q}{4}$. Now, with the additional assumption $\frac{f(T)}{g(S \setminus T)} \in [\frac{\eps}{4}, \frac{4}{\eps}]$, we have
\[
    \min\{f(S), g(S \setminus T)\} \geq \frac{\eps}{4}\max\{f(T), g(S \setminus T)\} \geq \frac{\eps q}{16},
\]
and hence, in the $q$th round, neither $f_q(T)$ nor $g_q(S \setminus T)$ are set to $\infty$. Finally, using the upper-bound of Lemma~\ref{lemma:rounding}, we obtain $\widetilde h(S) \leq (1+\eps)\left(f(T) + g(S \setminus T)\right) = (1 + \eps)h(S)$.

Let us analyze its running time. To this end, let $\alpha_q$ be the number of entries of $f_q$ that are not set to $\infty$, i.e., $\alpha_q = \{S \mid f_q(S) < \infty\}$; define $\beta_q$ similarly. We show that the exact min-sum subset convolution of $f_q$ and $g_q$ can be performed in time $O\left(\min\{\alpha_q\beta_q, \frac{2^n}{\eps}\}\polylog(\frac{2^n}{\eps})\right)$ as follows.

First, we analyze lines 7-8. To maintain the intermediate functions $f_q$ and $g_q$, we use an event queue to skip the rounds with $\alpha_q = 0$ or $\beta_q = 0$, since in such cases the min-sum subset convolution is filled with $\infty$ and would have been skipped in line 10 anyway. Additionally, we can perform the transitions $f_q \rightarrow f_{q + 1}$, $g_q \rightarrow g_{q + 1}$, i.e., between two consecutive iterations, in time proportional to the number of non-$\infty$ entries, i.e., $O(\alpha_q + \beta_q)$, which is l.e.q.~$O(\min\{\alpha_q\beta_q, \frac{2^n}{\eps}\})$.

There are now two ways to compute the exact convolution of $f_q$ and $g_q$.~First, naively performing the convolution between the non-$\infty$ entries takes time $O(\alpha_q \beta_q)$: Consider \emph{disjoint} $T_1, T_2 \subseteq [n]$ such that $f_q(T_1) \neq \infty$ and $g_q(T_2) \neq \infty$.~Then, we can update $h_q(T_1 \cup T_2)$ with the value of $f_q(T_1) + g_q(T_2)$, if smaller. The second way exploits the fact that the values in $f_q$ and $g_q$ are \emph{bounded}. Indeed, consider the procedure \textsc{Scale} (within Alg.~\ref{algo:close_conv}) which maps $f(U)$ to $\ceil{\frac{4}{\eps q} f(U)}$, if $\frac{\eps q}{16} \leq f(U) \leq q$. This imposes that $f_q$ only takes values bounded by $\ceil{\frac{4}{\eps}}$; analogously for $g_q$. We can hence use the $\widetilde O(2^n M)$-time algorithm for bounded input, as described by Bj\"orklund et al.~\cite{fsc} (and mentioned several times in our motivation), by setting $M = \ceil{\frac{4}{\eps}}$.\footnote{Indeed, this running time corresponds to that of the FFT-based solution in the sequence setting~\cite[Lemma 5.7.2]{karol_phd}, namely $\widetilde O(n W)$.} The running time can thus be bounded by
\[
\displaystyle\sum_{q} \min\{\alpha_q\beta_q, 2^n / \eps\}\polylog(2^n / \eps),
\]
by using the better of the two ways. As in Ref.~\cite{approx_min_plus}, we introduce a threshold $\lambda$ and split the previous sum into two cases
w.r.t.~$\lambda$, i.e., $\beta_q \leq \lambda$ and $\beta_q > \lambda$.
At this point, observe that the second case can occur at most $O(\frac{2^n}{\lambda}\log \reveps)$ times. To see why this is the case, consider an entry $f(U)$ for $U \subseteq [n]$. Since $q$ grows geometrically, there are $O(\log \reveps)$ rounds in which $f(U)$ is not set to $\infty$. Hence, we obtain $\sum_{q} \alpha_q = O(2^n \log \reveps)$; similarly, $\sum_{q} \beta_q = O(2^n \log \reveps)$. With this observation, we can bound the running time by
\begin{flalign*}
    \left(\displaystyle\sum_{\beta_q \leq \lambda} \alpha_q\beta_q + \displaystyle\sum_{\beta_q > \lambda} \frac{2^n}{\eps}\right) \polylog(2^n / \eps)
    &\leq \left(\lambda \cdot 2^n \log \reveps + \frac{2^n}{\eps}\frac{2^n}{\lambda}\log \reveps \right) \polylog(2^n / \eps)\\
    &\leq \left(\lambda \cdot 2^n + \frac{(2^n)^2}{\eps \lambda}\right) \polylog(2^n / \eps).
\end{flalign*}
The expression is minimized for $\lambda = \left(2^n / \eps \right)^{1/2}$, yielding the total running time of $\widetilde O(2^\frac{3n}{2} / \sqeps)$.
\end{proof}

\subsection{Strongly-Polynomial Approximation Algorithm}\label{appendix:algo_extra}

We outline the algorithm underlying Thm.~\ref{thm:strongly_approx_min_sum_subset_convolution}.

\begin{algorithm}
    \caption{$\textsc{ApxMinSumSubsetConv}(f,g,\eps)$ [Thm.~\ref{thm:strongly_approx_min_sum_subset_convolution}] -- Running time: $\widetilde O(2^\frac{3n}{2} / \sqeps)$}
    \label{algo:apx_min_sum_subset_conv_final}
\begin{algorithmic}[1]
    \State $\widetilde{h}_1 \gets \textsc{DistantConv}(f,g,\eps)$
    \State $\widetilde{h}_2 \gets \textsc{CloseConv}(f,g,\eps)$
    \State $\widetilde h(S) \gets \min  \{ \widetilde{h}_1(S), \widetilde{h}_2(S) \}$ for all $S \subseteq [n]$
    \State \textbf{Return} $\widetilde{h}$
\end{algorithmic}
\end{algorithm}

\section{Auxiliary Results}

In this section, we state for completeness several results from the framework of Bringmann et al.~\cite{approx_min_plus} that we used throughout our paper, along with their corresponding references.

\subsection{Distant Covering Lemma}\label{appendix:distant_covering_lemma}

\begin{lemma}[{Distant Covering~\cite[Cor.~5.10]{approx_min_plus}}]
\label{lemma:distant_covering_lemma}

Given vectors $A, B \in \mathbb{R}^{d}_{+}$ and a parameter $\eps > 0$, there exist vectors $A^{(1)}, \ldots, A^{(s)}, B^{(1)}, \ldots, B^{(s)} \in \mathbb{R}^d_+$ with $s = O(\log d \log \reveps)$ such that
\begin{enumerate}[(i)]
    \item for all $i, j \in [d]$ and all $l \in [s]$: $\max\{A^{(l)}[i], B^{(l)}[j]\} \geq (1-2\eps)(A[i] + B[j])$, and
    \item for all $i, j \in [d]$ if $\frac{A[i]}{B[j]} \not\in [\eps, 1/\eps]$ then $\exists l \in [s]$: $\max\{A^{(l)}[i], B^{(l)}[j]\} \leq A[i] + B[j]$.
\end{enumerate}
The auxiliary vectors $A^{(1)}, \ldots, A^{(s)}, B^{(1)}, \ldots, B^{(s)}$ can be computed in time $O(d \log d \log \reveps)$.
\end{lemma}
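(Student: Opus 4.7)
The plan is to build the $s = O(\log d \log(1/\eps))$ pairs via a doubly-hierarchical decomposition, handling the two distant regimes $A[i] \ge B[j]/\eps$ and $B[j] \ge A[i]/\eps$ separately (by symmetry it suffices to describe the first, paying a factor of $2$ in $s$). In the first regime one has $A[i] + B[j] \in [A[i], (1+\eps)A[i]]$, so to witness property~(ii) for a given $(i,j)$ it suffices to produce a single index $l$ with $A^{(l)}[i] \le A[i]$ and $B^{(l)}[j] \le A[i]$. The task then reduces to covering all such $(i,j)$ while simultaneously keeping property~(i) alive for \emph{every} $(i,j,l)$, including the non-targeted pairs.

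First I would sort the indices of $A$ in nonincreasing order of value and overlay a balanced binary tree, giving $O(\log d)$ levels. At each level, the tree nodes partition the ranks into dyadic ranges, so a node $v$ owns a set of indices on which the $A$-values span at most a constant factor. Within $v$ I would overlay a finer logarithmic grid of $O(\log(1/\eps))$ scales indexed by $l$, corresponding to the dyadic buckets of $B$-values between $\eps \cdot \min_{i \in v} A[i]$ and $\min_{i \in v} A[i]$, which is the band of $B$-values that can legitimately coexist with the dominance condition. For each pair $(v,l)$ I would set $A^{(v,l)}[i] = A[i]$ when $i$ lies in the rank range of $v$ and $A^{(v,l)}[i] = \alpha_{v,l}$ otherwise, and dually $B^{(v,l)}[j] = B[j]$ when $B[j]$ lies in the $l$-th bucket and $B^{(v,l)}[j] = \beta_{v,l}$ otherwise. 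The caps $\alpha_{v,l}, \beta_{v,l}$ would be chosen as the appropriate running maxima along the respective hierarchies.

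Property~(ii) would then be verified by a routing argument: for any distant pair $(i,j)$ with $A[i] \ge B[j]/\eps$, there is a unique node $v$ containing $i$ at the finest level where the $A$-values within $v$ are still within a $(1+\eps)$-factor of $A[i]$, and $B[j]$ falls in some scale $l$ inside $v$; for that $(v,l)$ both $A^{(v,l)}[i] = A[i]$ and $B^{(v,l)}[j] = B[j]$ hold, so $\max = A[i] \le A[i] + B[j]$. The total count is $O(\log d \log(1/\eps))$ by construction, and the setup cost is dominated by the one sort plus a single pass over the $s$ vectors of length $d$, matching the claimed $O(d \log d \log(1/\eps))$ bound.

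The hard part will be calibrating the caps $\alpha_{v,l}, \beta_{v,l}$ so that property~(i) survives uniformly. Since property~(i) must hold at \emph{every} $l$, a pair $(i,j)$ whose $i$ is outside $v$'s range \emph{and} whose $B[j]$ is outside the $l$-th scale must still satisfy $\max(\alpha_{v,l}, \beta_{v,l}) \ge (1-2\eps)(A[i]+B[j])$ — the caps cannot be zero. One must argue that taking the caps equal to the running maximum of the ``cheaper'' side at level $(v,l)$ is simultaneously (a) large enough to cover such untargeted pairs up to the $(1-2\eps)$ slack, using the dyadic partitioning to telescope bound pieces, and (b) small enough that on the \emph{targeted} pair the caps do not corrupt property~(ii). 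This is precisely where the $(1-2\eps)$ slack in property~(i), rather than exact equality, is spent, and is the only genuinely delicate step of the construction.
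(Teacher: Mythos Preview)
The paper does not prove this lemma. It is listed in Appendix~\ref{appendix:distant_covering_lemma} under ``Auxiliary Results,'' explicitly attributed as Corollary~5.10 of Bringmann, K\"unnemann, and W\k{e}grzycki~\cite{approx_min_plus}, and stated only ``for completeness'' so that it can be invoked in the proof of Lemma~\ref{lemma:distant_summands}. There is consequently no in-paper argument to compare your sketch against; any comparison would have to be with the original proof in~\cite{approx_min_plus}, which is outside the present paper.

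That said, one step in your sketch is not correct as written. You sort the indices of $A$ by value, overlay a balanced binary tree on the \emph{ranks}, and then assert that each node ``owns a set of indices on which the $A$-values span at most a constant factor.'' A dyadic rank interval carries no such guarantee: for $A[i] = 2^i$ the values in any rank block of size $r$ span a factor of $2^{r-1}$. Your later routing step (``the finest level where the $A$-values within $v$ are still within a $(1+\eps)$-factor of $A[i]$'') presupposes exactly this bounded-ratio property, so as stated the argument does not go through. A rank-based tree gives you contiguity in the sorted order, not multiplicative closeness of values; to get the latter you would need to bucket by value scale, but doing that naively introduces a dependence on $M$, which is precisely what the lemma avoids. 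The construction in~\cite{approx_min_plus} resolves this tension differently, and if you want to reconstruct it, the natural entry point is their proof of the full Sum-to-Max Covering (restated here as Lemma~\ref{lemma:sum_to_max_covering}), of which Distant Covering is a derived corollary.
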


\section{Input Format}\label{appendix:input_format}

For simplicity, we will denote by $n$ the input size corresponding to the problem. Note that the natural input size for subset convolutions on a subset lattice of order $k$ is indeed $2^k$.

We follow the setup introduced by Bringmann et al.~\cite{approx_min_plus}: Input numbers in all \emph{approximate} problems are represented in floating-point, while those in \emph{exact} problems are represented in the standard bit representation. The choice of this particular input format is motivated by the use of floating-point operations in the equivalences of approximate min-sum subset convolution and exact min-max subset convolution (see Thm.~\ref{thm:equivalence}).

Namely, in the reduction from exact min-max to approximate min-sum, we need to \emph{exponentiate} the given numbers. However, under the standard bit representation, this requires shifting from $O(\log n)$-bit integers to $\poly(n)$-bit integers. Understandably, this is not efficient enough. The main observation of Bringmann et al.~\cite{approx_min_plus} is that if $m$ is an $O(\log n)$-bit integer in bit-representation, then one can store $2^m$ as a floating-point number by storing $m$ as the exponent. This implies that the floating-point number has an $O(\log n)$-bit exponent (and an $O(1)$-bit mantissa). For the other direction, i.e., from approximate min-sum in floating-point representation to exact min-max in bit representation, we use the fact that for exact min-max we can replace the input numbers by their \emph{ranks}, yielding $O(\log n)$-bit integers in bit representation~\cite{approx_min_plus}.

\sparagraph{Bit Representation.} The standard bit representation is used for the exact min-max subset convolution. The key observation is that we can replace the actual input numbers by their \emph{ranks}, the position in the sorted order of \emph{all} input numbers (this takes near-linear time and, in our context, does not modify the time-complexity of the algorithms). This construction ensures that the numbers are integers in $\{1, \ldots, \polylog(n)\}$ and, hence, take $O(\log n)$ bits. 

\sparagraph{Floating-Point Representation.} As regards the $(1 + \eps)$-approximate problems, we assume that the input is given in floating-point format, following the setup proposed Bringmann et al.~\cite{approx_min_plus}. In particular, if the input numbers are in the range $[1, M]$, it suffices to store for each input number $w$ its rounded algorithm $e = \floor{\log_2 w}$, requiring $O(\log \log M)$ bits, and an approximation of the mantissa, i.e., $w / 2^e$, requiring only $O(\log \frac{1}{\eps})$ bits, yielding a total of $O(\log \log M + \log \frac{1}{\eps})$ bits. We refer the reader to Ref.~\cite{approx_min_plus} for more details.

\section{Frank-Tardos' Framework and Lossy Kernels}\label{appendix:lossy_kernels}

Frank-Tardos' framework~\cite{frank_tardos} is known for its ability to reduce the input values in weighted \NP-hard problems so that the algorithm solving the original instance becomes strongly polynomial instead of weakly polynomial~\cite{poly_kernel_etscheid}. In fact, on closer inspection, e.g. using the construction in ~\cite[Example~1.1]{poly_kernel_bentert}, we can also obtain a polynomial kernel for the minimum-cost $k$-coloring problem (we inspected this for $k = 2$).

However, the instance constructed in this way is only guaranteed to preserve the \emph{comparison outputs} between the (partial) solution values. There is no guarantee on the approximation of the values obtained in this way after running the approximation algorithm. Therefore, we cannot \emph{directly} use Frank-Tardos' as is before running our weakly polynomial approximation algorithm, Thm.~\ref{thm:weakly_approx_min_sum_subset_convolution}. We believe that using $\alpha$-approximate kernels~\cite{lossy_kernels} may be the way forward. The first step is to understand whether our applications accept such kernels, and whether they are of polynomial size, e.g., Dvor\'ak et al.~\cite{pcst_lossy} tackled many variants of the Steiner tree problem, but the prize-collecting, though mentioned, remained unexplored. This is orthogonal to our proposal and left as an intriguing future work.

\end{document}